\newif\ifproc
\setlist[enumerate]{label={\arabic*)},font={\bfseries}}
	\setlist[enumerate,1]{label=(\alph*), font={\bfseries}}
	\setlist[enumerate,1]{label=\arabic*.}
\newcommand{\Oh}{\mathcal{O}}
\newcommand{\OhOp}[1]{\Oh\mathopen{}\mathclose\bgroup\left( #1 \aftergroup\egroup\right)}
\newcommand{\FPT}{{\sf FPT}\xspace}
\newcommand{\XP}{{\sf XP}\xspace}
\newcommand{\prob}[3]{
	\begin{center}
		\fbox{~\begin{minipage}{.97\textwidth}
			\vspace{2pt}
			\noindent
			\normalsize\textsc{#1}

			\vspace{4pt}
			\setlength{\tabcolsep}{3pt}
			\renewcommand{\arraystretch}{1.0}
			\begin{tabularx}{\textwidth}{@{}lX@{}}
				\normalsize\textbf{Input:}	& \normalsize#2 \\
				\normalsize\textbf{Question:}		 & \normalsize#3
			\end{tabularx}
		\end{minipage}\ \ }
	\end{center}
}
\DeclareMathOperator{\tw}{tw}
\DeclareMathOperator{\fvs}{fvs}
\DeclareMathOperator{\vc}{vc}
\DeclareMathOperator{\td}{td}
\DeclareMathOperator{\nd}{nd}
\newcommand{\CC}{\mathcal{C}}
\theoremstyle{theorem}
\newtheorem*{rep@theorem}{\rep@title}
\newcommand{\newreptheorem}[2]{%
\newenvironment{rep#1}[1]{%
 \def\rep@title{#2 \ref{##1}}%
 \begin{rep@theorem}}%
 {\end{rep@theorem}}}
\newcommand{\ubp}{\textsc{Unary Bin Packing}\xspace}
\newcommand{\sdp}{\textsc{Star Decomposition}\xspace}
\newcommand{\msdp}{\textsc{Multi-Star Decomposition}\xspace}
\newcommand{\is}{\textsc{Independent Set}\xspace}
\newcommand{\Gsa}{(G(V,E),\mathbf{s},\mathbf{a})}
\newcommand{\sa}{(\mathbf{s},\mathbf{a})}
\newcommand{\sasd}{$(\mathbf{s},\mathbf{a})$-star decomposition\xspace}
\title{Parameterized Complexity of the Star Decomposition Problem} 
\titlerunning{}
\author{Sahab Hajebi}{Department of Combinatorics and Optimization, University of Waterloo, Waterloo, Ontario, Canada}{s2hajebi@uwaterloo.ca}{}{}
\author{Ramin Javadi}{Department of Mathematical Sciences, Isfahan University of Technology, Isfahan, Iran}{rjavadi@iut.ac.ir}{https://orcid.org/0000-0003-4401-2110}{}
\authorrunning{S.\ Hajebi and R.\ Javadi}
\keywords{Star decomposition, parameterized complexity, vertex cover, treewidth, tree-depth, neighborhood diversity, linear programming.}
\begin{document}

\maketitle

\begin{abstract}
A star of length $ \ell $ is defined as the complete bipartite graph $ K_{1,\ell } $.
In this paper we deal with the problem of edge decomposition of graphs into stars of varying lengths. 
 Given a graph $ G $ and  a list of integers $S=(s_1,\ldots, s_t) $, an $S$-star decomposition of $ G $ is an edge decomposition of $ G $ into graphs $G_1 ,G_2 ,\ldots,G_t $ such that $G_i$ is isomorphic to an star of length $s_i$, for each $i \in\{1,2,\ldots,t\}$.
Given a graph $G$ and a list of integers $S$, the \sdp problem asks if $G$  admits an $ S $-star decomposition. 
The problem is known to be NP-complete even when all stars are of length three. In this paper, we investigate parametrized complexity of the problem with respect to the structural parameters of the input graph such as minimum vertex cover, treewidth, tree-depth and neighborhood diversity as well as some intrinsic parameters of the problem such as the number of distinct star lengths, the maximum size of stars and the maximum degree of the input graph, giving a roughly complete picture of the parameterized complexity landscape of the problem.  
\end{abstract}



\vspace{1cm}

\section{Introduction}
By a star of length $s$ we mean the complete bipartite graph $K_{1,s}$ and the vertex of degree $s$ is called the center of the star. Given a simple graph  $G$, a decomposition of $G$ is a collection of subgraphs of $G$  say $\mathcal{G}=\{G_1,\ldots,  G_n\}$ where each edge of $G$ is in exactly one $G_i$. If all $G_i$'s are isomporphic to a star, then we say that $\mathcal{G}$ is a star decomposition.  Let $\mathcal{S}$ be a multiset of positive integers. By an $\mathcal{S}$-star decomposition, we mean a star decomposition where the lengths of stars of the decomposition are in $\mathcal{S}$ with the same multiplicities. Given a graph $G$ and a multiset $\mathcal{S}$, the \sdp problem asks if $G$ admits an $\mathcal{S}$-star decomposition. In this paper, we consider the high-multiplicity setting of the problem in which the multiset $\mathcal{S}$ is encoded with its distinct elements and its multiplicities (in binary encoding). More precisely, suppose that the distinct integers in $\mathcal{S}$ are $s_1,\ldots, s_d$ with multiplicities $a_1,\ldots, a_d$, then we assume that $\mathcal{S}$ is encoded by $(\mathbf{s},\mathbf{a})$, where $\mathbf{s}=(s_1,\ldots, s_d)$ and $\mathbf{a}=(a_1,\ldots, a_d)$. Therefore, $G$ admits an $(\mathbf{s},\mathbf{a})$-star decomposition if there exists a star decomposition $\mathcal{G}$ for $G$ such that for each $i\in [d]$, there are exactly $a_i$ stars of length $s_i$ in $\mathcal{G}$. It is clear that if such a decomposition exists, then $\mathbf{s}.\mathbf{a}=\sum_{i=1}^d a_is_i=|E(G)| $. So, the problem can be formulated as follows.

\prob{\sdp}{A graph $G=(V,E)$ and two vectors of positive integers $(\mathbf{s},\mathbf{a})$ such that $\mathbf{s}.\mathbf{a}=|E|$.}{
Does $G$ admit an $(\mathbf{s},\mathbf{a})$-star decomposition?}

The \sdp problem can be seen as a two-stage problem where in first stage, we choose an orientation on the edges of the graph $G$ and in second stage we distribute the star lengths on vertices of $G$ such that the sum of  star lengths assigned to each vertex $v$ is equal to the out-degree of $v$. The second stage is in fact a bin packing problem in which a set of items with prescribed weights (star lengths) as well as a set of bins (vertices of $G$) with different capacities (out-degrees) are given and the task is to assign items to bins such that the sum of item weights assigned to each bin does not exceed its capacity. 

In this paper, we will investigate parameterized complexity of \sdp by considering structural parameters of the input graph such as treewidth ($\tw$), tree-depth ($\td$), neighborhood diversity ($\nd$) and vertex cover number ($\vc$) (for the definitions, see e.g. \cite{lampis}) as well as the intrinsic parameters of the problem such as $d$ (the number of star types), $s=\max_i s_i$ (the maximum length of stars) and $\Delta$ (the maximum degree of the graph). It can be easily seen that 
\begin{align}
d&\leq s\leq \Delta,\\
\tw&\leq \td \leq \vc,\label{eq:tw}\\
\nd&\leq 2^{\vc}+\vc. \label{eq:nd}
\end{align}
For the proof of Inequalities \eqref{eq:tw} and \eqref{eq:nd} see \cite{lampis}. It is known that the decomposition of a graph into stars of lengths three is NP-hard on graphs of maximum degree $4$ (see \cite{dyer}). Therefore, the problem is para-NP-hard with all parameters $d,s,\Delta$. 
\begin{theorem}{\rm \cite{dyer}}\label{thm:NPhard}
The problem of decomposition of a graph of maximum degree four into stars of length three $K_{1,3}$ is NP-hard. Consequently, \sdp is NP-hard for $d=1$, $s=3$ and $\Delta=4$. 
\end{theorem}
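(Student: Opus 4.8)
The plan is to prove the first sentence, namely that deciding whether a graph of maximum degree four has a $K_{1,3}$-decomposition is \NPh; the ``consequently'' part is then immediate, since an instance $(G,\mathbf{s},\mathbf{a})$ of \sdp with $\mathbf{s}=(3)$ and $\mathbf{a}=(|E(G)|/3)$ (and $\Delta(G)\le 4$) asks exactly this, has $s=\max_i s_i=3$ and $d=1$, and when $|E(G)|$ is not a multiple of $3$ we simply output a fixed no-instance such as $(K_3,(3),(1))$.

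The first step is a reformulation. Orienting every edge from the centre of its star to the corresponding leaf shows that $G$ has a $K_{1,3}$-decomposition if and only if $G$ has an orientation in which every out-degree is divisible by $3$: given such an orientation the out-edges at each vertex split into arbitrary triples, each a $K_{1,3}$ because $G$ is simple, and each edge lies in exactly one triple; the converse is clear. If $\Delta(G)\le 4$ this is very rigid --- every out-degree is $0$ or $3$, a vertex of degree at most $2$ has all incident edges oriented inward, and a degree-$3$ vertex must orient its three edges all the same way. So a degree-$3$ vertex acts as a \emph{replicator} forcing its three incident edges to agree; binary trees of such vertices then copy and route a Boolean signal to as many places as needed, and composing two of them (or reversing the reference orientation of an edge) fixes the polarity, all within maximum degree $4$.

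Using these ``wires'' I would reduce from an arity-three \NPc problem such as \textsc{Positive $1$-in-$3$-SAT}/\textsc{Exact Cover by $3$-Sets} or \textsc{Not-All-Equal $3$-SAT}. Each variable gets a gadget emitting consistent copies of its truth value along wires; each clause gets a constant-size gadget, built only from vertices of degree at most $4$, whose orientations with out-degrees divisible by $3$ --- once the incoming wire directions are fixed --- are precisely the ones certifying that the clause holds. Since a single vertex of degree at most $4$ can touch only four edges, any place where several wires must be combined is implemented by a small bounded-degree tree, which one must check introduces no spurious orientations. Completeness (a satisfying assignment lifts gadget-by-gadget to a valid orientation) and soundness (in any valid orientation the replicators force each variable's wires to one common value, and each clause gadget's constraint guarantees this value satisfies the clause) then complete the reduction, the degree bound being preserved throughout.

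The crux is the clause gadget: expressing an arity-three condition through vertices whose out-degree is forced to be $0$ or $3$, and ruling out all parasitic orientations, is where the real effort lies --- and pushing the maximum degree down to exactly $4$, rather than some larger constant, is the most delicate constraint, since it strictly limits how much a single vertex can ``see''.
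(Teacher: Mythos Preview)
The paper does not prove this theorem at all: it is quoted verbatim from Dyer and Frieze~\cite{dyer} and used as a black box, so there is no paper proof to compare against.

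Your outline is a reasonable sketch of the standard gadget-reduction approach, and the orientation reformulation is correct and useful: for $\Delta\le 4$ a $K_{1,3}$-decomposition is exactly an orientation with all out-degrees in $\{0,3\}$, and degree-$3$ vertices do replicate as you describe. However, as you yourself note, the entire difficulty sits in the clause gadget, and you have not constructed one; ``a constant-size gadget \ldots\ whose orientations with out-degrees divisible by $3$ \ldots\ are precisely the ones certifying that the clause holds'' is a specification, not a construction. Until such a gadget is exhibited and its parasitic orientations ruled out, this is a plan rather than a proof. If you want to carry it through, the original argument in~\cite{dyer} reduces from \textsc{Exact Cover by $3$-Sets} and does build the required gadgets; reproducing (or simplifying) that construction is what would be needed here.
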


We begin by improving the above result and proving that \sdp is NP-hard even on Hamiltonian planar graphs when  $d=2$, $s=3$ and $\Delta =3$. We also prove that this is the border line for NP-hardness, i.e. the problem can be solved in polynomial time on graphs of maximum degree $3$ whenever $s\leq 2$ or $(s,d)=(3,1)$. 

By considering \sdp on trees and complete graphs, we will show that \sdp is W[1]-hard with respect to vertex cover number ($\vc$) on the class of trees of depth at most two and complete graphs (see Theorems~\ref{thm:W1} and \ref{TreesW1}). Therefore, the problem is para-NP-hard with respect to treewidth ($\tw$), tree-depth ($\td$) and neighborhood diversity ($\nd$). Hence, we will consider the combined parameters as follows. We will prove that the problem is W[1]-hard with respect to $(\td,d)$ and $(\nd,d)$ and is FPT with respect to $(\vc,d)$, $(\td,s)$ and $(\nd,s)$. We will also prove that \sdp is in XP with respect to $\vc$ as well as $(\tw,d)$. A summary of our results is shown in Table~\ref{tbl:results}.

\begin{table}[t]
	\centering\footnotesize
	\caption{\label{tbl:results}Overview of new NP-hardness and parameterized results for the \sdp. $\mathscr{T}$ and $\mathscr{CB}$ stands for the class of trees and complete bipartite graphs, respectively.}
	\vskip 7pt
	\begin{tabular}{ p{.03\textwidth}  p{.2\textwidth}  p{.25\textwidth}  p{.15\textwidth}  p{.2\textwidth} }     
		\toprule
		 & & d & s & $\Delta$  \\%
				\cmidrule{3-5}
				&& NP-h for $d=1$    & NP-h for $s=3$ & NP-h for $\Delta=3$ \\
				&&  [Th. \ref{thm:NPhard}]   &  [Ths. \ref{thm:NPhard},\ref{delta=3}]&  [Th. \ref{delta=3}]  \\
    
				&& W[1]-h for $\mathscr{CB}$ and $\mathscr{T}$ & FPT for   $\mathscr{CB}$   & 
                \\
   & & [Ths. \ref{thm:W1},\ref{TreesW1}]  & 
   & \\
		\cmidrule{2-5}
		$\tw$& NP-h for $\tw=1$  &
		W[1]-h [Th. \ref{TreesW1}]&  
        &\\
		 & [Th. \ref{TreesW1}] & XP [Cor. \ref{cor:d,tw}] & XP  & XP \\
   
		\cmidrule{2-5}
		$\td$& NP-h for $\td=2$  &
		 W[1]-h for $\mathscr{T}$ [Cor. \ref{cor:td,d}] & FPT [Cor. \ref{cor:s,td}]& FPT \\
		 & [Th. \ref{TreesW1}] & XP  \\
		 		 \cmidrule{2-5}
$\nd$& NP-h for $\nd=2$&
W[1]-h [Cor. \ref{cor:vc,d}] & FPT [Th.~\ref{thm:s,nd}] & FPT \\
&  [Cor. \ref{cor:vc,d}]   & 
&  \\
		 \cmidrule{2-5}
		 $\vc$& XP [Th. \ref{XPvc}] &
		FPT  [Th. \ref{FPT:vc,d}]& FPT & FPT \\
		 & W[1]-h for $\mathscr{CB}$ and $\mathscr{T}$ \\
   & [Ths. \ref{thm:W1},\ref{TreesW1}] & & \\
		\bottomrule
	\end{tabular}
\end{table}	
\subsection{Related work}
The decomposition problems on graphs have been extensively studied in the literature. In particular, the study of decomposition of a graph into stars dates back to 1975 in which Yamamoto et al. \cite{yamam75} found necessary and sufficient conditions for the decomposition of a complete graph or complete bipartite graph into stars of length $\ell$. This result was then generalized to the decomposition of complete multipartite graphs \cite{yamam78} and complete multigraphs \cite{tarsi79}. Also, in 1980, Tarsi \cite{tarsi} found a sufficient condition for the existence of a decomposition of a graph into stars of varying sizes (see Theorem~\ref{thm:tarsi}). In particular, he proved that if $s=\max_i s_i\leq n/2$, then $K_n$ admits an $(\mathbf{s},\mathbf{a})$-star decomposition (provided that the obvious necessary condition $\mathbf{s}.\mathbf{a}=n(n-1)/2$ is satisfied).

The main conjecture about the complexity of decomposition problems was Holyer's conjecture \cite{holyer} which asserts that given a graph $H$, the problem of existence of an $H$-decomposition for a given graph is NP-complete whenever $H$ has a connected component with at least three edges. After a series of partial results, Holyer's conjecture was completely settled by Dor and Tarsi in a seminal work \cite{dor}.
A recent result is due to Cameron et al. \cite{cameron} which studied NP-completeness of the decomposition problem of complete multigraph into stars of varying sizes. 

 A related problem is studied in \cite{cicalese} in which we are aimed to decompose the graph into stars such that the minimum length of stars in the decomposition is as large as possible. They proved that the problem is NP-hard for planar graphs with maximum degree four, and for a given graph, it is hard to check whether there exists a star decomposition with stars of length at least three (it appears to be tractable when three is replaced with two). They also proved that the problem is polynomial-time solvable on trees as well as graphs with maximum degree three, and they provide a linear time 1/2-approximation algorithm.

Another related problem is \textsc{Star Partition} in which, given a graph $G$ and positive integer $k$, one seeks for a partition of the vertex set (instead of the edge set) of $G$  into $k$ sets $V_1,\cdots,V_k$ such that the induced graph of $G$ on $V_i$ is a star. The problem is known to be NP-complete for many classes including chordal bipartite graphs \cite{muller}, split graphs, line graphs, subcubic bipartite planar graphs \cite{shalu}, and $(C_4 
,\cdots ,C_{2t})$-free bipartite graphs for every fixed
$t \geq 2$ \cite{duginov}. On the other hand, it is known that the problem is polynomially solvable for bipartite permutation graphs \cite{brandst, farber}, convex bipartite graphs \cite{bang, damaschke}, doubly convex bipartite graphs \cite{bang}, and trees \cite{hedetniem}. Form the parameterized complexity aspect, the problem is W[2]-hard  with respect to $k$ for bipartite graphs \cite{raman}. However, the problem is fixed parameter tractable with respect to $k$  on split graphs and graphs of girth at least five \cite{divya, raman}, and with respect to parameters vertex cover and treewidth on general graphs \cite{Nguyen}.

\subsection{Notations and Terminology}
For two integers $m,n$, the sets $\{1,\ldots, n\}$ and $\{m,\ldots, n\}$ are denoted by $[n]$ and $[m,n]$, respectively. For a (multi-)set $A$ of integers, we define $\sigma(A)=\sum_{a\in A} a$. For a graph $G=(V,E)$ and two subsets of vertices $A,B\subseteq V$, the set of all edges with one endpoint in $A$ and one endpoint in $B$ is denoted by $E_G(A,B)$ and we define $d_G(A,B)=|E_G(A,B)|$ and $E_G(A)=E_G(A,A)$ (we drop subscripts whenever there is no ambiguity). Also, we say that $A$ is complete (resp. incomplete) to $B$ if every vertex in $A$ is adjacent (resp. nonadjacent) to every vertex in $B$. For a subset $S\subseteq V$, the induced graph of $G$ on $S$ is denoted by $G[S]$. Let $G=(V,E)$ be a digraph. For every vertex $v\in V$, $N_G^+(v)$ (resp. $N_G^-(v)$) stands for the set of vertices $u$ where $vu$ (resp. $uv$) is an edge in $G$. Also, define $\deg^+_G(v)=|N^+_G(v)|$ and $\deg^-_G(v)=|N^-_G(v)|$. Moreover, for a subset $S\subseteq V$, we define $N^+_S(v)=N^+_G(v)\cap S$ and $N^-_S(v)=N^-_G(v)\cap S$.


\section{Tools}
In this section, we gather all necessary tools that we need through the paper. 
One of the main tools that we deploy is formulating \sdp in different integer linear programming (ILP) formats and then applying existing machinery on solving integer programmings. Consider a general standard form of an ILP as follows.
\begin{equation} \label{eq:ILP}
\min\{\mathbf{w}.\mathbf{x} \mid A\mathbf{x} = \mathbf{b},\ \mathbf{l} \leq \mathbf{x} \leq \mathbf{u},\ \mathbf{x} \in \mathbb{Z}^n\},
\end{equation}
where $\mathbf{w}.\mathbf{x}$ is the dot product of $\mathbf{w}$ and $\mathbf{x}$.
It is well-known that ILP is NP-hard in general \cite{cygan}. The most classic result regarding fixed-parameter tractability of ILP is an old result due to Lenstra which states that ILP is in FPT with respect to the number of variables.

\begin{theorem} {\rm \cite{lenstra}} \label{thm:ILPvar}
An ILP feasibility instance of size $L$ with $p$ variables can be solved using $O(p^{2.5 p+o(p)}.L)$ arithmetic operations and space polynomial in $L$.
\end{theorem}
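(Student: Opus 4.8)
The plan is to reconstruct Lenstra's algorithm, which decides integer feasibility by induction on the number of variables. First I would preprocess the instance: using the Hermite normal form to parameterize the integer solutions of $A\mathbf{x} = \mathbf{b}$ and folding the bounds $\mathbf{l} \le \mathbf{x} \le \mathbf{u}$ into inequalities, the question reduces to whether a rational polytope $P = \{\mathbf{x} \in \mathbb{R}^p : C\mathbf{x} \le \mathbf{d}\}$ contains a point of a lattice which, after an affine transformation, we may take to be $\mathbb{Z}^p$; here the number of variables is at most the original $p$ and all data have bit-size $\poly(L)$. If $P$ is unbounded one first intersects with a large box whose size is controlled by Cramer's rule, so without loss of generality $P$ is a polytope.

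The heart of the argument is a geometric dichotomy obtained from lattice basis reduction. Compute (via the ellipsoid method) an ellipsoid $E$ with $E \subseteq P \subseteq \mathbf{z} + p(E - \mathbf{z})$ for its center $\mathbf{z}$, and run the LLL algorithm on $\mathbb{Z}^p$ with respect to the inner product defined by $E$. This yields one of two outcomes: either $E$, and hence $P$, already contains a lattice point, which we output; or we extract a nonzero integer vector $\mathbf{c}$ for which the width $w_{\mathbf{c}}(P) := \max_{\mathbf{x}\in P}\mathbf{c}\cdot\mathbf{x} - \min_{\mathbf{x}\in P}\mathbf{c}\cdot\mathbf{x}$ is bounded by a function $f(p)$ depending only on $p$ --- this is Khinchin's flatness theorem, and the reduced basis makes it constructive. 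In the second case $P \cap \mathbb{Z}^p \neq \emptyset$ if and only if, for some integer $k$ in the short interval $[\lceil \min_{\mathbf{x}\in P}\mathbf{c}\cdot\mathbf{x}\rceil, \lfloor \max_{\mathbf{x}\in P}\mathbf{c}\cdot\mathbf{x}\rfloor]$, the $(p-1)$-dimensional polytope $P \cap \{\mathbf{x} : \mathbf{c}\cdot\mathbf{x} = k\}$ contains a lattice point of the corresponding sublattice; so I would recurse on each of these at most $f(p)+1$ subproblems after a change of coordinates that turns the sublattice back into $\mathbb{Z}^{p-1}$.

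Finally I would bound the cost. The recursion tree has depth at most $p$ and branching at most $f(p)+1$ at each node, so it has $\prod_{i=1}^{p}(f(i)+1)$ leaves; plugging in the known flatness bound gives $p^{\,O(p)}$ nodes, and a careful accounting of the basis reduction and the linear-programming steps at each node contributes the remaining factor, yielding the stated $O\!\left(p^{2.5p + o(p)} \cdot L\right)$ arithmetic operations. Polynomial space follows from exploring the tree depth-first, provided one also checks that the bit-sizes of $C$, $\mathbf{d}$ and the change-of-basis matrices stay $\poly(L)$ through all $p$ levels, which is where the Hermite-normal-form bounds are used again. The main obstacle is exactly this last point together with making the flatness step algorithmic: proving that when no inscribed ellipsoid of the required radius exists the LLL output genuinely certifies small width, and controlling the arithmetic so that neither the running time nor the space degrades across the recursion.
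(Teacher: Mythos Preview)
The paper does not give a proof of this statement at all: it is quoted as a black-box tool from the literature (Lenstra's result, with the stated running time coming from later refinements), so there is nothing in the paper to compare your argument against.

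That said, your sketch is a faithful outline of the Lenstra-type approach --- reduce to lattice-point feasibility in a polytope, use an enclosing/inscribed ellipsoid plus LLL to either find a lattice point or certify a flat direction, and recurse on the $f(p)+1$ hyperplane slices --- and the caveats you flag (keeping bit-sizes $\poly(L)$ through the recursion, and making the flatness step constructive) are indeed the places where the work lies. One small remark: the specific $p^{2.5p+o(p)}$ bound is not Lenstra's original analysis but comes from subsequent improvements (notably Kannan's shortest-vector-based branching); if you were to write this out in full you would want to cite or reconstruct that sharper recursion rather than the cruder $2^{O(p^2)}$ or $p^{O(p)}$ that drops out of plain LLL plus the classical flatness constant.
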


Later, many other techniques beyond Lenstra's algorithm developed for fixed-parameter solvability of ILP. One of the nice approach is solving an ILP when the coefficient matrix is an special matrix such $N$-fold, tree-fold, etc. In this regards, we can assign some invariants to the coefficient matrix $A$ and solve the ILP in FPT with respect to these invariants. 

The \textit{dual graph} corresponding to the matrix $A$ denoted by $G_D(A)$ is a graph which has a vertex for each row of $A$ and two vertices are adjacent if there exists a column of $A$ such that both corresponding rows are non-zero. The \textit{tree-depth} of a graph $G$ denoted by $\td(G)$ is defined as the minimum height of a rooted forest $F$ such that $V(F)=V(G)$ and for each edge $uv \in E(G)$, $u$ is either an ancestor or a descendant of $v$ in $F$. The \textit{dual tree-depth} of $A$ is defined as $\td_D(A)=\td(G_D(A))$. 
Also, the \textit{dual treewidth} of $A$ is defined as $\tw_D(A)=\tw(G_D(A))$. 

We need the following result which states that an ILP can be solved in FPT time with respect to $\|A\|_{\infty}$ and $\td_D(A)$. 

\begin{theorem}{\rm \cite{tecrep}} \label{thm:tdILP}
	The general ILP problem \eqref{eq:ILP} can be solved in time $O^*((\|A\|_{\infty} + 1)^{2^{\td_D(A)}})$. 
\end{theorem}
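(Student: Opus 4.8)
The plan is to route the proof through the theory of Graver bases together with an iterative augmentation framework. Recall that the Graver basis $\mathcal{G}(A)$ is the set of conformal-minimal nonzero integer vectors in the kernel lattice $\{\mathbf{z}\in\mathbb{Z}^n : A\mathbf{z}=\mathbf{0}\}$, where $\mathbf{x}\sqsubseteq\mathbf{y}$ means $x_iy_i\ge 0$ and $|x_i|\le|y_i|$ for every coordinate $i$. I would combine two ingredients: (i) a bound $g_\infty(A):=\max_{\mathbf{z}\in\mathcal{G}(A)}\|\mathbf{z}\|_\infty \le (\|A\|_\infty+1)^{2^{O(\td_D(A))}}$ on the largest Graver element, and (ii) the Graver-best augmentation procedure: starting from any feasible solution and repeatedly moving along a feasible direction $\lambda\mathbf{g}$ with $\mathbf{g}$ a Graver element that decreases $\mathbf{w}.\mathbf{x}$ by as much as any such direction, one reaches an optimal solution after a number of steps polynomial in the input encoding length (a standard consequence of the fact that every feasible direction decomposes conformally into Graver elements).

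For (i) the key object is an elimination forest $F$ of the dual graph $G_D(A)$ of height $\td_D(A)$. I would argue by induction along the levels of $F$. Peeling off the row corresponding to the root of a subtree splits the remaining rows into groups whose supports are almost disjoint; a Graver element of the whole matrix can be decomposed conformally according to these groups, and its $\ell_\infty$-norm bounded by composing the bounds obtained recursively for the subtrees, the composition step raising the current bound to a power controlled by $\|A\|_\infty$. Iterating this over all $\td_D(A)$ levels produces the doubly-exponential tower $(\|A\|_\infty+1)^{2^{\td_D(A)}}$; this is exactly the block-structure argument used for $N$-fold and tree-fold integer programs, carried out here in the generality of bounded dual tree-depth.

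For (ii), with $M:=g_\infty(A)$ in hand, the remaining task is to find a Graver-best step within the claimed time. Because we only ever need augmenting vectors with entries in $[-M,M]$, this is a bounded-norm optimization over the kernel of $A$, which can be solved by dynamic programming along the elimination forest $F$: we process rows level by level, and at each node carry a state recording the partial right-hand-side defect, whose range has size $(2Mn\|A\|_\infty+1)^{O(1)}$; the overall cost stays within $O^*((\|A\|_\infty+1)^{2^{\td_D(A)}})$. An initial feasible solution (or a proof that none exists) is obtained by applying the same algorithm to an auxiliary instance with constraint matrix $[A \mid I]$, which leaves $\|A\|_\infty$ unchanged and $\td_D$ increased by a constant, and whose optimum indicates feasibility. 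Multiplying the polynomial bound on the number of augmentation rounds by the per-round cost yields the stated running time.

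The main obstacle is step (i): establishing that $g_\infty(A)$ is bounded by a function of $\|A\|_\infty$ and $\td_D(A)$ alone, with the precise dependence needed to obtain the exponent $2^{\td_D(A)}$. The delicate point is that a Graver element may have arbitrarily large support even when the dual tree-depth is small, so the bound on its entries has to come entirely from a careful conformal decomposition that mirrors the layers of $F$, controlling how much each layer can amplify the norm. Once this norm bound is secured, the augmentation framework, the dynamic program, and the feasibility reduction are comparatively routine.
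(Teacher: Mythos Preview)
The paper does not prove this theorem at all: it is quoted verbatim from \cite{tecrep} as a black-box tool, with no argument given. So there is nothing in the paper to compare your proposal against.

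That said, your sketch is a faithful outline of the approach in the cited source: bounding $g_\infty(A)$ via the layered structure of an elimination forest of $G_D(A)$, then running Graver-best augmentation with a dynamic program along that forest, and handling feasibility via an auxiliary $[A\mid I]$ instance. For the purposes of this paper, however, none of that machinery needs to be reproduced; a bare citation to \cite{tecrep} is what the authors intend and is sufficient.
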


It is known that the above result cannot be generalized by replacement of $\td_D(A)$ with $\tw_D(A)$ since ILP is NP-hard even when $\|A\|_\infty=2$ and $\tw_D(A)=2$ \cite{tecrep}. However, Ganian et al. \cite{ganian} proved that
\begin{theorem} {\rm \cite{ganian}} \label{thm:twD}
	The general ILP problem \eqref{eq:ILP} can be solved in time $\Gamma^{O(\tw_D(A))} .n$, where 
\begin{equation} \label{eq:Gamma}
\Gamma= \max_{\mathbf{x}\in \mathbb{Z}^n: A\mathbf{x}=\mathbf{b},\ \mathbf{l}\leq \mathbf{x}\leq \mathbf{u}} \ \max_{k\in [n]} \left\|\sum_{j=1}^k A_jx_j\right\|_\infty. 
	\end{equation}
\end{theorem}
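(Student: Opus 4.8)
The plan is to solve the ILP \eqref{eq:ILP} by a dynamic program that runs along a tree decomposition of the dual graph $G_D(A)$, where $\Gamma$ plays the role of bounding the size of the DP tables. Note first that Theorem~\ref{thm:tdILP} does not suffice here: since $\tw_D(A) \le \td_D(A)$, bounded dual treewidth is a strictly weaker structural assumption than bounded dual tree-depth, and indeed ILP is \NPh already for $\tw_D(A)=2$ and $\|A\|_\infty = 2$, so the parameter $\Gamma$ is genuinely needed. The idea is that $\Gamma$ bounds, uniformly over feasible solutions, the $\ell_\infty$-norm of every prefix $\sum_{j\le k} A_j x_j$ of partial left-hand sides; a tree-decomposition DP only ever needs to remember these partial sums restricted to the current bag of rows, and they live in a set of size $O((2\Gamma+1)^{w+1})$ when the bag has $w+1$ rows.

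First I would compute a tree decomposition $(T,\{B_t\}_{t\in V(T)})$ of $G_D(A)$ of width $w=O(\tw_D(A))$ using a fixed-parameter (or constant-factor) treewidth algorithm, this cost being absorbed into the claimed bound, and then make it a nice rooted decomposition. The key structural observations are standard: the support of each column $A_j$ (the rows in which $A_j$ is nonzero) induces a clique in $G_D(A)$ by definition of the dual graph, hence is contained in some bag $B_{t(j)}$, which I fix as the \emph{home} of column $j$; and the bags containing a fixed row $i$ form a connected subtree $T_i$ of $T$, with every column whose support meets row $i$ having its home inside $T_i$. I would then order the columns $1,\dots,n$ so that all columns whose home lies in a subtree are processed before any column whose home lies outside it (a post-order traversal of $T$); one should check that passing to such an order does not increase $\Gamma$, or alternatively phrase $\Gamma$ and the DP with respect to the input order and argue compatibility. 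Processing columns in this order, the DP state at a node $t$ after a prefix of columns has been processed records, for each row $i\in B_t$, the current partial value $\sigma_i = \sum_{j\ \mathrm{processed}} A_{ij}x_j$, and for each such state vector we keep the minimum partial objective $\sum_{j\ \mathrm{processed}} w_j x_j$ realizing it. Crucially we retain only states with $|\sigma_i|\le\Gamma$ for all $i\in B_t$: any partial assignment extending to a feasible solution of \eqref{eq:ILP} has all its prefix sums within $[-\Gamma,\Gamma]$ by the definition of $\Gamma$, so pruning the rest discards no feasible (a fortiori no optimal) solution, and each table has $\Gamma^{O(\tw_D(A))}$ entries. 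The transitions are the usual ones: introducing a row sets its partial sum to $0$; at the home node of a column $j$ we branch over admissible values of $x_j$ and update the partial sums of the rows of $B_{t(j)}$ and the objective; forgetting a row $i$ is permitted only after all columns meeting $i$ have been processed, at which point states with $\sigma_i\ne b_i$ are deleted; join nodes add objectives of matching states; the optimum is read at the root. Correctness follows from the subtree property together with the processing order, and the running time is the number of nodes times the table size times the branching at introduce-column nodes, i.e.\ $\Gamma^{O(\tw_D(A))}\cdot n$.

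The step I expect to be the main obstacle is the branching over values of $x_j$, because the box constraints $l_j\le x_j\le u_j$ need not confine $x_j$ to a small range. The fix I would pursue is to observe that, given the incoming state at the home node $t(j)$, only those values of $x_j$ that keep every affected partial sum $\sigma_i$ inside $[-\Gamma,\Gamma]$ can lie on a path to a feasible solution; since $A_j\ne 0$ for every column that touches some row, this is an integer interval of length at most $2\Gamma+1$, so the branching remains bounded, while columns that are identically zero contribute nothing to any constraint and can be fixed independently to the cost-optimal endpoint of their box. Turning this into a rigorous argument — carefully handling several rows simultaneously constraining the same variable, confirming that the pruned DP still reaches every feasible solution, and verifying that neither the table sizes nor the branching exceed $\Gamma^{O(\tw_D(A))}$ — is where the real work lies; the remainder is a routine, if meticulous, tree-decomposition DP.
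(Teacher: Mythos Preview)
This theorem is not proved in the paper at all: it is quoted from \cite{ganian} as a black-box tool in Section~2, with no accompanying argument. There is therefore nothing in the present paper to compare your proposal against; the authors simply invoke the result and move on to apply it (in Corollary~\ref{cor:d,tw}).

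That said, your sketch is broadly the right shape for how results of this type are established --- a DP over a tree decomposition of $G_D(A)$, exploiting that each column's support is a clique and hence lives in some bag, with states recording bounded partial row-sums. The one point you yourself flag but do not resolve deserves care: $\Gamma$ in \eqref{eq:Gamma} is defined with respect to the \emph{given} column order $1,\dots,n$, whereas your DP processes columns in a tree-compatible (post-)order. Reordering columns can change the prefix sums, so either one must show that the tree-compatible order can be chosen without increasing $\Gamma$ beyond a constant factor absorbed into the exponent, or one must redefine the DP states so that the bound $|\sigma_i|\le\Gamma$ genuinely follows from \eqref{eq:Gamma} under the original order. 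This is the real technical crux of the cited result, and your proposal leaves it open. For the purposes of the present paper, however, none of this matters: the theorem is imported, not proved.
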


We also need a generalization of Hall's Theorem regarding existence of an SDR for a set family. Let $\mathcal{F}=(F_i: i\in I)$ be a family of sets and $\eta: I\to \mathbb{Z}^+$ be a function. An $\eta$-SDR for $\mathcal{F}$ is a family $(S_i: i\in I)$ of disjoint subsets such that $S_i\subseteq F_i$ and $|S_i|=\eta(i)$, for each $i\in I$. The following theorem gives a necessary and sufficient condition for the existence of an $\eta$-SDR for $\mathcal{F}$.

\begin{theorem} \label{thm:SDR}
Let  $\mathcal{F}=(F_i: i\in I)$ be a family of sets and $\eta: I\to \mathbb{Z}^+$ be a function. There exists an $\eta$-SDR for $\mathcal{F}$ if and only if 
\[
|\bigcup_{j\in J} F_j|\geq \sum_{j\in J} \eta(j),\ \forall J\subseteq I,
\]
and if the above conditions hold, then an $\eta$-SDR for $\mathcal{F}$ can be found in $O(n^3)$, where $n=|\cup_{i\in I} F_i|$.
\end{theorem}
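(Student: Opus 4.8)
The plan is to reduce the statement to the classical Hall marriage theorem by a blow-up of the index set, and then to extract the algorithm from (the proof of) Hall's theorem, equivalently from a max-flow formulation. The necessity of the condition I would dispatch first, directly: if $(S_i : i\in I)$ is an $\eta$-SDR, then for any $J\subseteq I$ the sets $S_j$ ($j\in J$) are pairwise disjoint subsets of $\bigcup_{j\in J}F_j$, so $|\bigcup_{j\in J}F_j|\geq\sum_{j\in J}|S_j|=\sum_{j\in J}\eta(j)$.

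For sufficiency I would introduce the blow-up family. Let $I'=\{(i,k):i\in I,\ 1\leq k\leq\eta(i)\}$ and $F'_{(i,k)}=F_i$. A transversal (an ordinary SDR) of $\mathcal{F}'=(F'_{i'}:i'\in I')$ is the same object as an $\eta$-SDR of $\mathcal{F}$: from an injection $r:I'\to\bigcup_i F_i$ with $r(i,k)\in F_i$ one recovers $S_i=\{r(i,k):1\leq k\leq\eta(i)\}$, and conversely every $\eta$-SDR yields such an $r$. Now apply Hall's theorem to $\mathcal{F}'$: a transversal exists iff $|\bigcup_{i'\in J'}F'_{i'}|\geq|J'|$ for all $J'\subseteq I'$. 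The key observation is that, for a fixed set $J\subseteq I$ of first coordinates occurring in $J'$, the left-hand side equals $|\bigcup_{j\in J}F_j|$ regardless of which copies are chosen, whereas $|J'|$ is maximized by taking all $\eta(j)$ copies of every $j\in J$, giving $|J'|=\sum_{j\in J}\eta(j)$. Hence the exponentially large family of Hall conditions for $\mathcal{F}'$ collapses to the single stated family, and sufficiency follows.

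For the algorithmic claim I would first note that once the condition holds, its $J=I$ instance gives $\sum_{i\in I}\eta(i)\leq n$, and since each $\eta(i)\geq 1$ this also bounds $|I|\leq n$ and $|I'|=\sum_i\eta(i)\leq n$. So finding a transversal of $\mathcal{F}'$ is a bipartite matching problem on the graph $H$ with parts $I'$ and $U:=\bigcup_i F_i$ (each of size at most $n$), with an edge joining $(i,k)$ to $x$ whenever $x\in F_i$; here $|E(H)|=\sum_i\eta(i)|F_i|\leq n^2$. By Hall's theorem, whose hypotheses we have just verified, $H$ has a matching saturating $I'$; a maximum matching, found by iteratively discovering augmenting paths, therefore has size $|I'|$, and at most $|I'|\leq n$ augmentations are performed, each in time $O(|E(H)|)=O(n^2)$, for a total of $O(n^3)$. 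Reading off the $U$-endpoints of the matching edges at the copies of $i$ gives $S_i$, hence the $\eta$-SDR.

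I do not expect a genuine obstacle here; the only points that need a little care are the collapsing of the Hall conditions just described, and the observation that the promised $O(n^3)$ bound, stated purely in terms of $n$, is legitimate precisely because the Hall condition forces $|I|$ and $\sum_i\eta(i)$ to be $O(n)$. As an alternative to the blow-up one can phrase everything as an integral $s$--$t$ max-flow, with arc capacities $\eta(i)$ from the source to $i$, unit capacities from $U$ to the sink, and arcs $(i,x)$ for $x\in F_i$: the finite cuts are exactly $\sum_{i\notin J}\eta(i)+|\bigcup_{i\in J}F_i|$ over $J\subseteq I$, so by the max-flow--min-cut theorem the maximum flow equals $\sum_i\eta(i)$ if and only if the stated condition holds, and integrality of the flow again yields the $\eta$-SDR within the same time bound.
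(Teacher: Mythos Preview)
The paper does not actually prove Theorem~\ref{thm:SDR}; it is stated without proof in the ``Tools'' section as a known generalization of Hall's theorem, to be used later as a black box (in the proof of Theorem~\ref{FPT:vc,d}). So there is no author's proof to compare against.

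Your argument is correct. The necessity is immediate, the blow-up reduction to ordinary Hall is clean and the collapse of the Hall conditions for $\mathcal{F}'$ to the stated family over $J\subseteq I$ is handled properly. The algorithmic bound is also sound: the crucial step is your observation that the $J=I$ instance of the hypothesis forces $\sum_i\eta(i)\leq n$, hence $|I'|\leq n$ and $|E(H)|\leq n^2$, after which the standard augmenting-path bound gives $O(n^3)$. The alternative max-flow formulation you sketch is equally valid and yields the same complexity. Nothing is missing.
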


Finally, we need the following result from \cite{tarsi} which guarantees the existence of a star decomposition in good expanders. 

Given a graph $G=(V,E)$, the \textit{edge expansion} of $G$ is defined as 
\begin{equation} \label{eq:expansion}
    \varphi(G)= \min_{S\subsetneq V: S\neq \emptyset } \dfrac{1}{2}\left(\dfrac{1}{|S|}+\dfrac{1}{|V\setminus S|}\right)d(S,V\setminus S).
\end{equation}

\begin{theorem}\label{thm:tarsi}
Let $I=\Gsa$ be an instance of \sdp. If $s=\max_i{s_i} \leq \varphi(G)$, then $I$ is a yes instance.    
\end{theorem}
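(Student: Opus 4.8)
The plan is to reduce the decomposition to choosing an orientation of $G$ with a suitable, ``packable'' out-degree sequence, and then to obtain that sequence by rounding the half-degree vector, using the expansion bound to keep the rounding error inside the slack it provides. Write $n=|V|$ and let $\mathcal S$ be the multiset of star lengths. First I would prove the reformulation: $G$ admits an $\sa$-star decomposition if and only if $\mathcal S$ can be split into parts $(\mathcal S_v)_{v\in V}$ (one per vertex, possibly empty) so that, setting $r_v\df\sum_{x\in\mathcal S_v}x$, we have $\sum_{v}r_v=|E|$ and $e(G[W])\le\sum_{v\in W}r_v$ for every $W\subseteq V$. For the ``if'' direction, apply Theorem~\ref{thm:SDR} with $I=V$, $F_v=\{e\in E:e\ni v\}$, $\eta(v)=r_v$: its hypothesis $|\bigcup_{v\in W}F_v|=e(G[W])+d_G(W,V\setminus W)\ge\sum_{v\in W}r_v$ is exactly the displayed inequality written for $V\setminus W$ after using $\sum_v r_v=|E|$, and the resulting pairwise-disjoint sets $S_v\subseteq F_v$ with $|S_v|=r_v$ cover all of $E$, hence orient each edge towards one of its endpoints with $v$ receiving out-degree $r_v$; the $r_v$ edges owned by $v$ are then pairwise distinct, so they split into groups of sizes prescribed by $\mathcal S_v$, each group spanning a star centered at $v$. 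The ``only if'' direction is immediate by orienting each edge away from the center of its star. (Given $\sum_v r_v=|E|$, the family $e(G[W])\le\sum_{v\in W}r_v$, $\forall W$, is equivalent to the two-sided family $e(G[W])\le\sum_{v\in W}r_v\le e(G[W])+d_G(W,V\setminus W)$, $\forall W$.)

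For the out-degree sequence I would target the half-degree vector $\rho_v\df\deg_G(v)/2$. From $\sum_{v\in W}\deg_G(v)=2e(G[W])+d_G(W,V\setminus W)$ we get $\sum_{v\in W}\rho_v=e(G[W])+\tfrac12 d_G(W,V\setminus W)$, the exact midpoint of the admissible interval $[\,e(G[W]),\,e(G[W])+d_G(W,V\setminus W)\,]$, with slack $\tfrac12 d_G(W,V\setminus W)$ on each side. The hypothesis $s\le\varphi(G)$ yields $d_G(W,V\setminus W)\ge\tfrac{2s\,|W|\,|V\setminus W|}{n}\ge\tfrac{2s(n-1)}{n}$ for every proper non-empty $W$ (the cases $n\le2$ are trivial), so this slack is always at least $\tfrac{s(n-1)}{n}$; taking $W=\{v\}$ also gives $\deg_G(v)\ge\tfrac{2s(n-1)}{n}$, so each $\rho_v$ is already close to $s$. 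Hence it suffices to find an integral partition $(\mathcal S_v)$ whose part-sums $r_v$ satisfy $\bigl|\sum_{v\in W}(r_v-\rho_v)\bigr|\le\tfrac12 d_G(W,V\setminus W)$ for all $W$; and since $\sum_v(r_v-\rho_v)=0$, the left side is at most the total over-allocation $P\df\sum_{v:\,r_v>\rho_v}(r_v-\rho_v)$, so it is enough to achieve $P\le\tfrac{s(n-1)}{n}$.

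Producing such a partition is the heart of the proof and the step I expect to be the main obstacle. The natural attempt is to assign the stars greedily in non-increasing order of length, each time placing the current star on a vertex whose load lies currently farthest below its target $\rho_v$; some vertex is always still below target while stars remain, because the residual lengths sum to $\sum_v(\rho_v-\mathrm{load}_v)$. Then a vertex is overloaded only by the last star it receives and is never chosen again afterwards, so $P$ is the sum of these final overshoots, each of which is smaller than the length of the star that caused it, hence at most $s\cdot\tfrac{n-1}{n}$; bounding the whole sum by $\tfrac{s(n-1)}{n}$ is the delicate part, and appears to need balancing the small-$n$ regime (where the maximum deficit at each step is large because the residual demand sits on few vertices) against the large-$n$ regime (where the slack $\tfrac12 d_G(W,V\setminus W)$ is itself close to $s$) -- perhaps finished by an exchange argument on the final configuration, or by driving the assignment from the cut slacks directly rather than from the half-degrees. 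An alternative route is induction on the number of stars: delete a longest star $K_{1,s}$ together with a choice of its $s$ incident edges that damages every cut as little as possible -- which never isolates a vertex, each degree being at least about $2s$ -- and check that $\varphi$ of the residual graph still dominates the remaining, no-longer-larger lengths; here too the crux is the case in which many stars attain the maximum length $s$, so that no cut can absorb any damage.
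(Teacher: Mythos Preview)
Your reformulation is exactly the right one, and it matches the paper's setup: the paper records the orientation criterion as Lemma~\ref{lem:orient} (an orientation with out-degrees $d^+_i$ exists iff $\delta(V)=0$ and $\delta(A)\le d(A,V\setminus A)$ for all $A$, where $\delta(i)=\deg(i)-2d^+_i$), which is equivalent to your condition $e(G[W])\le\sum_{v\in W}r_v$ after the substitution $\delta(v)=2(\rho_v-r_v)$. So the skeleton is sound.

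The gap is in how you choose the partition $(\mathcal S_v)$. You reduce to achieving $P=\sum_{v:r_v>\rho_v}(r_v-\rho_v)\le s(n-1)/n$, because you bound $\bigl|\sum_{v\in W}(r_v-\rho_v)\bigr|$ by $P$ uniformly in $W$ and then compare $P$ to the \emph{smallest} cut slack. That uniform bound throws away the dependence of the slack on $|W|$, and the resulting target on $P$ is both hard to reach and unnecessary: your greedy gives one overshoot per overloaded vertex, each below $s$, but their sum can easily be of order $s\cdot n$, and you correctly flag that you cannot close this. The alternative induction sketch has the same issue in the tight regime you mention.

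The paper's (Tarsi's) key idea is to control the \emph{spread} instead of the total overshoot. Choose the partition so that
\[
\max_v\bigl(\rho_v-r_v\bigr)-\min_v\bigl(\rho_v-r_v\bigr)\ \le\ s,
\]
equivalently $\max_i\delta(i)-\min_i\delta(i)\le 2s$. This is trivially reachable by local moves: while some $i,j$ have $\delta(i)-\delta(j)>2s$, move one star from $B_j$ to $B_i$; each move strictly decreases $|\delta(i)-\delta(j)|$. Now fix $A$ with $|A|=r$ and set $\mu=\delta(A)/r$. Some $i\in A$ has $\delta(i)\ge\mu$, so by the spread bound every $j$ has $\delta(j)\ge\mu-2s$; summing over $V\setminus A$ and using $\delta(V)=0$ gives $-\mu r\ge(\mu-2s)(n-r)$, hence
\[
\delta(A)=\mu r\ \le\ \frac{2s\,r(n-r)}{n}\ \le\ \frac{2\varphi(G)\,r(n-r)}{n}\ \le\ d(A,V\setminus A),
\]
the last step by the definition of $\varphi$. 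This is exactly the inequality your reformulation needs, and it never passes through a global bound on $P$. In short: replace ``bound $P$'' by ``bound the spread and then average'', and your argument closes.
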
 

In Section~\ref{sec:nd,s}, we will generalize the above result and prove that if $C$ is a vertex cover of $G$ and $s\leq \varphi(G[C])$, then $G$ admits an $(\mathbf{s},\mathbf{a})$-star decomposition. 

\section{Hardness results}
In this section, we prove that \sdp is NP-hard when $(d,s,\Delta)=(2,3,3)$ and this is the border line for NP-hardness of the problem in the sense that it is polynomially solvable when either $s\leq 2$ or $(d,s,\Delta)=(1,3,3)$ (by Theorem~\ref{thm:NPhard}, we know that \sdp is NP-hard when $(d,s,\Delta)=(1,3,4)$). Also, we prove that the problem is W[1]-hard with respect to the parameters  $\vc$ and $d$ on the class of all complete bipartite graphs as well as the class of trees with depth $2$. 

\begin{theorem}\label{delta=3}
\sdp is polynomially solvable when $s\leq 2$. It is also polynomially solvable on graphs of maximum degree three when $s=3$ and $d=1$. However, it is NP-hard on Hamiltonian planar cubic graphs when $s=3$ and $d=2$.
\end{theorem}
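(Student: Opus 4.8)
The plan is to prove the three parts separately, treating the polynomial-time cases first and then the NP-hardness.

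\textbf{Polynomial cases.} For $s \le 2$, the only star types are single edges ($K_{1,1}$) and paths of length two ($K_{1,2}$). An $(\mathbf{s},\mathbf{a})$-star decomposition with $a_1$ copies of $K_{1,1}$ and $a_2$ copies of $K_{1,2}$ is equivalent to choosing a set of $a_2$ internally disjoint (as edge sets) $P_3$'s covering $2a_2$ edges, with the remaining $a_1 = |E| - 2a_2$ edges forming the singletons. This can be encoded as a degree-constrained subgraph / $b$-matching problem, or directly as the following reduction: contract the requirement to "can $G$ be covered by $a_2$ edge-disjoint paths of length two?" which is solvable by matching techniques (e.g.\ a reduction to a $T$-join or to general matching on an auxiliary graph), hence polynomial. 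I would spell out the auxiliary graph whose perfect matchings correspond to valid pairings of edges at each vertex.

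For the case $s = 3$, $d = 1$ on graphs of maximum degree three: here every star is a $K_{1,3}$, so each center must have out-degree exactly $3$, i.e.\ degree exactly $3$ in $G$, and must consume all three of its incident edges. Thus the centers form an independent set in $G$ that dominates every edge, and each non-center vertex has all its edges going to centers. This means we need a partition of $V$ into centers $C$ and leaves $L$ such that $C$ is independent, $L$ is independent (every edge has a center endpoint and a leaf endpoint — wait, two leaves could be adjacent only if that edge is uncovered, which is impossible), and every vertex of $C$ has degree $3$. Checking existence of such a partition on subcubic graphs reduces to a $2$-coloring / constraint problem solvable in polynomial time (it is essentially asking whether $G$ has a proper $2$-coloring in which one color class is $3$-regular — a matching-type or $2$-SAT-type condition on each component). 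I expect this to go through cleanly once the structural characterization is made explicit.

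\textbf{NP-hardness for $s=3$, $d=2$, $\Delta = 3$, Hamiltonian planar cubic graphs.} The plan is a reduction from a known NP-hard problem on cubic planar graphs — a natural candidate is from Theorem~\ref{thm:NPhard} itself (decomposition into $K_{1,3}$ on max-degree-$4$ graphs) but more likely from \textsc{Hamiltonicity} or \textsc{Independent Set} or a constraint-satisfaction variant on cubic planar graphs, or from the $K_{1,3}$-decomposition result of \cite{dyer} with a gadget that reduces degree $4$ to degree $3$ while planarity is preserved. With $s=3$ and $d=2$, the allowed star lengths would be, say, $\{3, s_2\}$ for some small $s_2 \in \{1,2\}$; choosing $s_2 = 2$ gives flexibility: each vertex of degree $3$ can be either a $K_{1,3}$-center (using all $3$ edges) or contribute to a $K_{1,2}$ plus leave one edge, and the global counts $a_1, a_2$ enforce the combinatorial constraint. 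I would design a vertex gadget (degree-preserving, planar) simulating a Boolean variable and an edge/clause gadget, route the Hamiltonian cycle through them to guarantee the host graph is Hamiltonian planar cubic, and set $\mathbf{a}$ so that a valid decomposition exists iff the source instance is a yes-instance. The main obstacle — and the part requiring the most care — will be simultaneously enforcing (i) maximum degree exactly $3$ everywhere, (ii) planarity, and (iii) Hamiltonicity of the constructed graph, since these three constraints severely limit the gadgets available; in particular making the reduction preserve Hamiltonicity typically forces one to thread a fixed cycle through all gadgets, which constrains how gadgets can be wired together. I would handle this by choosing a source problem already restricted to Hamiltonian (or Hamiltonian-path) cubic planar graphs, so that a Hamiltonian structure is inherited rather than constructed from scratch.
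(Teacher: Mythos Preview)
Your polynomial-time cases are in the right spirit but more complicated than necessary. For $s\le 2$, the paper gives a one-line criterion: $G$ admits the decomposition iff the number of connected components with an odd number of edges is at most $a_1$ (each odd component forces at least one $K_{1,1}$; conversely, delete one edge from each odd component and use the standard fact that every connected graph with an even number of edges decomposes into $P_3$'s). No matching machinery is needed. For $s=3$, $d=1$, $\Delta\le 3$, you have the right structural observation --- the centers must form an independent set and are also a vertex cover --- so the condition is simply that $G$ is bipartite with one part entirely of degree~$3$; there is no 2-SAT or matching layer on top of that.

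The real gap is in the NP-hardness part. You are planning a gadget reduction, worrying (rightly) about simultaneously enforcing cubicness, planarity, and Hamiltonicity in the gadgets. The paper's insight is that \emph{no gadgets are needed at all}: take \textsc{Independent Set} on Hamiltonian planar cubic graphs (NP-hard by \cite{herbert}), keep the graph $G$ unchanged, and set $\mathbf{s}=(3,1)$, $\mathbf{a}=(k,\,3n/2-3k)$. In a cubic graph, a $K_{1,3}$-center uses all three incident edges, so two $K_{1,3}$-centers cannot be adjacent; hence the set of $K_{1,3}$-centers in any decomposition is an independent set of size $k$. Conversely, any independent set of size $k$ gives $k$ disjoint $K_{1,3}$'s, and every leftover edge is a $K_{1,1}$. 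Choosing $s_2=1$ (not $2$, as you suggest) is what makes the leftover edges trivially coverable and the equivalence immediate. Your gadget-based plan would be far harder to execute and is simply not needed here.
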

\begin{proof}
Let $\Gsa$ be an instance of \sdp. If $s\leq 2$, we can prove that $G$ has an \sasd if and only if the number of connected components of $G$ with odd sizes is at most $a_1$. To see this, note that in any \sasd of $G$, any connected component with odd number of edges has at least one star of length one, therefore the number of odd connected components is at most $a_1$. For the reverse, we can remove one edge from each odd connected component to ensure that all connected components have an even number of edges. Then, it is known that every connected graph with even number of edges can be decomposed into paths of length two. So, the star decomposition exists.    

Now, suppose that $(d,s,\Delta)=(1,3,3)$, i.e. we want to decompose the graph $G$ into stars of length three. If such decomposition exists, then the centers of stars is a vertex cover which is also an independent set. Therefore, $G$ is a bipartite graph where all vertices in one part are of degree three. It is clear that this condition is also sufficient for the existence of the decomposition of $G$ into stars of length three. The condition can be evidently checked in polynomial time. 

Finally, we prove that \sdp is NP-hard on Hamiltonian planar cubic graphs when $s=3$ and $d=2$. We give a reduction from \is on Hamiltonian planar cubic graphs which is known to be NP-hard \cite{herbert}. Given a Hamiltonian planar cubic graph $G$ and integer $k$ as an instance of \is, take the same graph $G$ and let $\mathbf{s}=(1,3)$ and $\mathbf{a}=(3n/2-3k,k)$. 
We show that $G$ has an independent set of size $k$ if and only if $G$ admits an $(\mathbf{s},\mathbf{a})$-star decomposition. First, let $G$ has an independent set $I=\{v_1,\ldots,v_k\}$. So, for each $v_i$ consider a star of length $3$ with center $v_i$ and cover all remaining edges with stars of length one. This gives an $(\mathbf{s},\mathbf{a})$-star decomposition for $G$. Now, suppose that $G$ admits an $(\mathbf{s},\mathbf{a})$-star decomposition and $I=\{v_1,\ldots,v_k\}$ is the set of centers of stars of length $3$ in this decomposition. For each $i,j\in [k]$, $v_i$ is not adjacent to $v_j$, since they have degree three and they cannot share any edge. Thus, $I$ is an independent set of size $k$ of $G$.
\end{proof}

\begin{theorem}\label{thm:W1} 
\sdp is W[1]-hard on the class of all complete bipartite graphs when parameterized by the parameters $\vc$ or $d$.  
\end{theorem}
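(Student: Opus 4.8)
The plan is to give two polynomial-time parameterized reductions, both landing on \sdp restricted to complete bipartite graphs: one establishing $W[1]$-hardness for the parameter $\vc$ and one for the parameter $d$. A single reduction cannot keep both parameters bounded, since \sdp is \FPT with respect to $(\vc,d)$ (Theorem~\ref{FPT:vc,d}), so the two reductions differ only in the choice of source. Both share the following construction: from a bin-packing instance with $k$ bins of common capacity $B$ and items of positive integer sizes $p_1,\dots,p_n$ with $\sum_i p_i=kB$, rescale by setting $p_i'=(k+1)p_i$ and $B'=(k+1)B$, take $G=K_{k,B'}$ with parts $A=\{u_1,\dots,u_k\}$ and $C$ of size $B'$, and let $(\mathbf{s},\mathbf{a})$ encode the multiset $\{p_1',\dots,p_n'\}$; then $\mathbf{s}\cdot\mathbf{a}=(k+1)\sum_i p_i=kB'=|E(G)|$, as required.

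Correctness rests on a magnitude argument: a star centered at a vertex of $C$ has all its leaves in $A$, hence length at most $|A|=k<k+1\le\min_j p_j'$, so in \emph{every} \sasd of $G$ all stars are centered in $A$. Given this, a decomposition is exactly an assignment of the multiset $\{p_1',\dots,p_n'\}$ to the centers $u_1,\dots,u_k$ with the lengths at $u_i$ summing to $\deg_G(u_i)=B'$: edge-disjointness across distinct centers in $A$ is automatic, and conversely any prescribed partition of $B'$ at a single $u_i$ into parts $p_j'$ is realized by handing out distinct neighbours of $u_i$ in $C$ as leaves, which is possible because $u_i$ is complete to $C$. Dividing through by $k+1$, $G$ admits an \sasd if and only if the $p_i$'s pack into the $k$ bins.

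For the parameter $\vc$, take the source to be \ubp parameterized by the number $k$ of bins, which is $W[1]$-hard (Jansen, Kratsch, Marx, and Schlotter); then $\vc(G)=\min\{k,(k+1)B\}=k$ equals the parameter, while the number of distinct star lengths may be as large as $n$, which is irrelevant here. For the parameter $d$, take the source to be the high-multiplicity variant of \ubp — items given by distinct sizes $q_1,\dots,q_d$ with multiplicities $c_1,\dots,c_d$ in unary, $k$ bins of capacity $B$, $\sum_j c_j q_j=kB$ — parameterized by the number $d$ of distinct item sizes, with the number $k$ of bins left unbounded; the same construction then yields $G=K_{k,(k+1)B}$ with exactly the $d$ distinct star lengths $(k+1)q_1,\dots,(k+1)q_d$ and multiplicities $c_1,\dots,c_d$, so $d$ is the parameter while $\vc(G)=k$ is unbounded (as it must be, given Theorem~\ref{FPT:vc,d}). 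In both cases the equivalence of the previous paragraph finishes the proof.

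The hard part is securing the source for the $d$-case: one needs \ubp (equivalently, bin packing with unary sizes, a bounded number of item types, and an unbounded number of bins) to be $W[1]$-hard with respect to the number of distinct item sizes. This cannot be read off from the $\vc$-reduction, and it does not follow from Lenstra's theorem (Theorem~\ref{thm:ILPvar}), since the natural configuration ILP has polynomially many but not $f(d)$-many variables; we obtain it by a dedicated reduction, e.g.\ from \textsc{Multicolored Clique}, encoding the $\binom{k}{2}$ colour pairs via a bounded number of item sizes together with a suitably padded bin capacity. Everything else — the magnitude argument and the two directions of the equivalence, including the routine realizability of a prescribed edge-partition at each center — is straightforward.
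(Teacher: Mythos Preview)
Your construction and correctness argument are essentially identical to the paper's: rescale the item sizes by $k+1$, build $K_{k,(k+1)B}$, and use the magnitude bound to force all star centers into the small side. The paper presents this as a \emph{single} reduction from \ubp and simply cites two separate source hardness results: Jansen--Kratsch--Marx--Schlotter for the parameter $m$ (number of bins, mapped to $\vc$) and Kouteck\'y--Zink~\cite{koutecky} for the parameter $d$ (number of item types, mapped to $d$). Your remark that ``a single reduction cannot keep both parameters bounded'' is true but beside the point---one construction with two different source parameters is exactly what the paper does, and is what you do too once you observe that your ``two reductions differ only in the choice of source.''

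The one place where you diverge is in treating the $W[1]$-hardness of \ubp with respect to the number of item types as something you must prove yourself via \textsc{Multicolored Clique}. This is already established in~\cite{koutecky}, so the ``hard part'' you flag dissolves into a citation; your sketch of the reduction is too vague to stand on its own, but it does not need to.
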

\begin{proof} 
We give a parameterized reduction from \textsc{Unary Bin Packing},
	which is defined as follows
	
\prob{\ubp}{
 Given $ d $ item types $w_1,\ldots, w_d\in \mathbb{Z}^+$ in increasing order and for each $i\in [d]$, there are $a_i$ items of weight $w_i$. Also, integers $ m,B \in {\mathbb{Z}^{+}}$ are given. The input $(\mathbf{w},\mathbf{a},m,B)$ are given in unary encoding. 
}{Can we partition the items into  $ m $ bins such that the sum of the weights of items in each bin is at most $ B $?}

	Note that if $\sum_{i=1}^da_i{w_i}> Bm$, then the answer is obviously no. So, without loss of generality, we can suppose that $\sum_{i=1}^da_i{w_i}\leq Bm$. Also, by adding $Bm-\sum_{i=1}^da_i{w_i}$ items of weight one, we can assume that $\sum_{i=1}^da_i{w_i}= Bm$ and in any solution, the sum of the weights of items in each bin is equal to $B$. 
	
	 It is proved that \ubp is strongly NP-complete \cite{Garey} and also W[1]-complete when parameterized by the number of bins $m$  \cite{Jansen} and also the number of types $d$ \cite{koutecky}. 
	
	Suppose that  $(\mathbf{w},\mathbf{a},m,B)$ is an instance of \ubp, where $\sum_{i=1}^da_i{w_i}= Bm$. Set $ n=B(m+1)$ and consider the complete bipartite graph $K_{m,n}$ with bipartition $(X,Y)$, where $X=\{v_1,\ldots, v_m\}$ and $Y=\{u_1,\ldots,u_n \}$. 
	 Now, for each $  i\in [d] $, define $ s_i= (m+1) w_i $. We claim that $ K_{m,n} $ admits an $ (\mathbf{s},\mathbf{a}) $-star decomposition if and only if $(\mathbf{w},\mathbf{a},m,B)$ is a yes-instance for \ubp.
		
 First, suppose that the items can be partitioned into  $ m $ bins (multisets) $ A_1,A_2,\ldots,A_m $ such that for each $ i\in [m] $, $ \sigma(A_i) = B $.  
	Now, for each $i\in [d]$ and for each item of weight $w$ in $A_i$, consider a star of length $(m+1)w$ with the center $v_i$. This gives an $(\mathbf{s},\mathbf{a})$-star decomposition for $K_{m,n}$. 
	
	Now, suppose that  there exists an $(\mathbf{s},\mathbf{a})$-star decomposition $\mathcal{D}$ for $K_{m,n}$. 
	Since all $s_i$'s are greater than $m$, the centers of all stars in $\mathcal{D}$ are in $X$. 
	Now, for each $ i\in[m]$,  let $ M_i $ be the multiset of lengths of all stars in $\mathcal{D}$ with center $ v_i $. It is clear that $\sigma(M_i)= (m+1)B$. Then, the multisets $ A_i=(m+1)^{-1}M_i $, $i\in [m]$, pack all items into $m$ bins of size $B$. Hence, $(\mathbf{w},\mathbf{a},m,B)$ is a yes-instance for \ubp.
	
	Note that since the input is given in unary encoding, this is a polynomial-time reduction. Also, since $\vc=m$ and \ubp is W[1]-hard with respect to both parameters $d$ and $m$, \sdp is W[1]-hard with respect to both parameters $d$ and $\vc$ on the complete bipartite graphs.
\end{proof}

Since the complete bipartite graph has neighborhood diversity equal to two, we have the following corollary.
\begin{corollary}\label{cor:vc,d}
\sdp is W[1]-hard on the class of all graphs of neighborhood diversity at most two when parameterized by the parameters $\vc$ or $d$.	
\end{corollary}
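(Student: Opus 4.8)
The plan is to obtain the corollary as an immediate consequence of Theorem~\ref{thm:W1}, by noting that the reduction constructed in its proof outputs only complete bipartite graphs and that every complete bipartite graph has neighborhood diversity at most two. Recall that in Theorem~\ref{thm:W1} an instance $(\mathbf{w},\mathbf{a},m,B)$ of \ubp (normalized so that $\sum_{i}a_iw_i = Bm$) is mapped to the \sdp instance given by the graph $K_{m,n}$ with $n=B(m+1)$, together with the vectors $\mathbf{s}=((m+1)w_1,\ldots,(m+1)w_d)$ and $\mathbf{a}$; this is a polynomial-time reduction, and it preserves both parameters in the sense that $\vc(K_{m,n})=m$ and the number of star types equals the number of item types $d$.

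The only additional ingredient needed is the elementary fact that $\nd(K_{m,n})\le 2$ whenever $m,n\ge 1$. Partition $V(K_{m,n})$ into its two sides $X$ and $Y$. Any two vertices of $X$ are non-adjacent and have the common neighborhood $Y$, so they belong to the same class of the neighborhood-diversity partition; symmetrically for $Y$ with neighborhood $X$. Hence $\{X,Y\}$ collapses to a neighborhood-diversity partition with at most two parts, so $\nd(K_{m,n})\le 2$ (in fact it equals $2$ unless $m=n=1$). Since the reduction produces $n=B(m+1)\ge 2$ and $m\ge 1$, every graph it outputs satisfies $\nd\le 2$.

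Putting these together, the reduction of Theorem~\ref{thm:W1} is already a parameterized reduction from \ubp---which is W[1]-hard with respect to the number of bins $m$ \cite{Jansen} and with respect to the number of item types $d$ \cite{koutecky}---into \sdp restricted to the class of graphs of neighborhood diversity at most two, parameterized by $\vc$, respectively by $d$. This yields the claim. I do not anticipate a real obstacle: the statement is Theorem~\ref{thm:W1} restricted to a subclass of its own output, and the only point needing a moment's care is the definition of neighborhood diversity for adjacent pairs (where one compares $N(u)\setminus\{v\}$ with $N(v)\setminus\{u\}$), a subtlety that does not arise here because all same-side pairs of $K_{m,n}$ are non-adjacent.
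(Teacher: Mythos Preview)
Your proposal is correct and follows exactly the paper's own argument: the paper simply observes that complete bipartite graphs have neighborhood diversity at most two, so Theorem~\ref{thm:W1} immediately yields the corollary. Your write-up just spells out this observation in more detail.
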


In the following theorem, we prove W[1]-hardness of \sdp with respect to the parameters $\vc$ or $d$ on another class of graphs which is trees of depth at most two.

\begin{theorem}\label{TreesW1}
\sdp is W[1]-hard on the class of all trees of depth at most two when parameterized by the parameters $\vc$ or $d$.
\end{theorem}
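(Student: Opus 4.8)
The plan is to adapt the reduction from \ubp that underlies Theorem~\ref{thm:W1}, replacing the complete bipartite host graph by a depth-two tree in which the children of the root play the role of the bins. Recall that \ubp is \Wh{1} with respect to the number of bins $m$ and with respect to the number of item types $d$, that it is strongly \NPc, and that its input is encoded in unary. As in the proof of Theorem~\ref{thm:W1}, after padding with items of weight one we may assume that the instance $(\mathbf{w},\mathbf{a},m,B)$ satisfies $\sum_{i=1}^{d}a_iw_i=Bm$, so that every feasible packing fills each bin to total weight exactly $B$; the padding increases $d$ by at most one, which is harmless for a parameterized reduction.

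From such an instance I would build a tree $T$: a root $r$ adjacent to vertices $v_1,\dots,v_m$, with each $v_i$ additionally adjacent to $(m+1)B-1$ fresh leaves. Then $T$ has depth two, $\deg_T(r)=m$, $\deg_T(v_i)=(m+1)B$ for every $i$, and all leaves have degree one; moreover $\vc(T)=m$, since $\{v_1,\dots,v_m\}$ is a vertex cover while selecting one leaf edge at each $v_i$ exhibits a matching of size $m$. Setting $s_i=(m+1)w_i$ for $i\in[d]$ and keeping the multiplicities $\mathbf{a}$, one checks $\mathbf{s}.\mathbf{a}=(m+1)\sum_ia_iw_i=(m+1)Bm=|E(T)|$, so $(T,\mathbf{s},\mathbf{a})$ is a legal \sdp instance produced in polynomial time from the unary encoding.

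For correctness I would run exactly the argument of Theorem~\ref{thm:W1}. Each $s_i=(m+1)w_i\ge m+1$ exceeds both $\deg_T(r)=m$ and the degree $1$ of every leaf, so in any $(\mathbf{s},\mathbf{a})$-star decomposition every star is centered at some $v_i$; hence the stars centered at a fixed $v_i$ partition the $(m+1)B$ edges incident to $v_i$. If $M_i$ denotes the multiset of their lengths, then $\sigma(M_i)=(m+1)B$ and every entry of $M_i$ is of the form $(m+1)w_j$, so $A_i=(m+1)^{-1}M_i$ is a multiset of item weights with $\sigma(A_i)=B$, and $(A_1,\dots,A_m)$ packs the items into $m$ bins of capacity $B$. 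Conversely, from a packing $A_1,\dots,A_m$ one builds a decomposition by creating, for each item of weight $w$ in $A_i$, a star of length $(m+1)w$ centered at $v_i$; since $\sum_{w\in A_i}(m+1)w=(m+1)B=\deg_T(v_i)$ these stars partition the edges at $v_i$, and over all $i$ the star of length $s_j$ is used exactly $a_j$ times. As the reduction is polynomial and both $\vc(T)=m$ and $d$ stay bounded by functions of the parameters of the \ubp instance, \Whness{1} with respect to $\vc$ and with respect to $d$ follows; since $T$ is a tree of depth two, the construction also shows that \sdp is \NPh on such trees.

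There is no serious obstacle; the delicate point is the double role of the factor $m+1$. Using it in the star lengths $s_i=(m+1)w_i$ guarantees no star can be centered at $r$ or at a leaf even in the presence of the weight-one padding items, and using it in the leaf count $(m+1)B-1$ makes the edge budget $\deg_T(v_i)=(m+1)B$ at each $v_i$ match a scaled bin capacity. One should also check the trivial inequality $(m+1)B-1\ge 1$ (true since $m,B\ge 1$), which is what makes $T$ a depth-two rather than a depth-one tree and keeps $\vc(T)$ equal to $m$.
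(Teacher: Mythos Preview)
Your proof is correct, but the route differs from the paper's. The paper keeps the item weights unscaled ($s_i=w_i$), introduces one additional star type $s_{d+1}=\ell$ with $\ell=\max\{m,B+2\}$ and multiplicity one, and gives the root degree $\ell$ (with $\ell-m$ extra leaf children); since $r$ is then the unique vertex of degree $\ell$, the lone long star is forced to the root, after which each of the $m$ children $u_i$ has exactly $B$ uncovered edges and the bin-packing correspondence is immediate. In contrast, you transplant the scaling trick from Theorem~\ref{thm:W1}: multiply weights by $m+1$ and give each $v_i$ degree $(m+1)B$, so that every star length exceeds $\deg_T(r)=m$ and no star can sit at the root at all.

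Your variant is arguably a bit cleaner: it does not add an auxiliary star type (so $d$ is preserved exactly rather than increased by one) and it gives $\vc(T)=m$ on the nose rather than $m+1$. The paper's variant, on the other hand, avoids the quadratic blow-up in the number of leaves (its tree has $O(\ell+mB)$ vertices versus your $O(m^2B)$), though both are polynomial in the unary input, so this is immaterial for the reduction. Either argument yields the \Whness{1} with respect to $\vc$ and $d$, and the \NPh on depth-two trees.
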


\begin{proof}
We again provide a parameterized reduction from \ubp. Let $(\mathbf{w},\mathbf{a},m,B)$ be an instance of \ubp. Again, without loss of generality, we can assume that $\sum_{i=1}^d a_iw_i=Bm$ and for each $i\in [d]$, $w_i\leq B$. Let $\ell =\max\{m,B+2\}$. Define the rooted tree $T$ with a root $r$ of degree $\ell$ with children $u_1,\ldots, u_{\ell}$ such that for each $i\in [m]$, $u_i$ has exactly $B$ children and for each $i\in [m+1,\ell]$, $u_i$ is a leaf. 
Also, for each $i\in [d]$, define $s_i=w_i$, $\hat{a}_{i}=a_i$ and $s_{d+1}= \ell$ and $\hat{a}_{d+1}=1$.
We claim that $(\mathbf{w},\mathbf{a},m,B)$ is a yes-instance of \ubp if and only if $T$ admits an $(\mathbf{s},\mathbf{\hat{a}})$-star decomposition. 

First, let $(A_1,\ldots, A_m)$ be a partition of items into $m$ bins such that $\sigma(A_i)=B$. 
	Now, for each $i\in [d]$ and for each item of weight $w$ in $A_i$, consider a star of length $w$ with the center $u_i$. Also, consider a star of length $\ell$ with the center $r$. This gives an $(\mathbf{s},\mathbf{\hat{a}})$-star decomposition for $T$. Conversely, let $\mathcal{D}$ be an $(\mathbf{s},\mathbf{\hat{a}})$-star decomposition of $T$. Since $r$ is the only vertex of $T$ with degree $\ell$, the center of the unique star of length $s_{d+1}=\ell$ is the vertex $r$. Now, for each $i\in [m]$, define $A_i$ be the multiset of lengths of all stars in $\mathcal{D}$ with center $u_i$. Since each $u_i$ has exactly $B$ children, it is clear that $\sigma(A_i)=B$. This implies that $(\mathbf{w},\mathbf{a},m,B)$ is a yes-instance for \ubp. 
	
	Note that the set $\{r,u_1,\ldots, u_m\}$ is a vertex cover of $T$ and so $\vc(T)\leq m+1$. Also, $T$ is a tree of depth two.  Since \ubp is W[1]-hard with respect to both parameters $m$ and $d$ (\cite{Jansen},\cite{koutecky}), \sdp is also $W[1]$-hard with respect to both parameters $\vc$ and $d$ on the class of all trees of depth at most two. 
\end{proof}

\begin{corollary}\label{cor:td,d}
	\sdp is W[1]-hard with respect to the combined parameter $(\td,d)$ even on the class of trees.	
\end{corollary}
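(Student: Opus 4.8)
The plan is to avoid any new gadget and instead re-read the reduction from the proof of Theorem~\ref{TreesW1} as a reduction to \sdp on the class of \emph{all} trees, observing that the trees it produces already have constant tree-depth, so the combined parameter $(\td,d)$ is controlled entirely by the source \ubp parameter. Concretely, the first step is to recall the shape of the output instance there: starting from an instance $(\mathbf{w},\mathbf{a},m,B)$ of \ubp, the reduction builds the rooted tree $T$ with root $r$, children $u_1,\dots,u_\ell$ (with $\ell=\max\{m,B+2\}$), where each $u_i$ for $i\in[m]$ has exactly $B$ leaf children, together with star lengths $\mathbf{s}=(w_1,\dots,w_d,\ell)$ and multiplicities $\hat{\mathbf{a}}=(a_1,\dots,a_d,1)$. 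In particular every root-to-leaf path of $T$ uses at most two edges.

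The second step is the elementary observation that a tree which, for some choice of root, has depth $h$ (measured in edges) has tree-depth at most $h+1$: the rooted tree itself is a rooted forest of height $h+1$ in which, for every edge $uv$, one endpoint is an ancestor of the other. Applying this with $h=2$ gives $\td(T)\le 3$ for \emph{every} instance produced by the reduction of Theorem~\ref{TreesW1}.

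The final step is to assemble these observations. The map of Theorem~\ref{TreesW1} is a polynomial-time parameterized reduction from \ubp parameterized by $d$ to \sdp, it outputs an instance whose number of distinct star lengths is $d+1$, and by the previous step its underlying graph is a tree of tree-depth at most the absolute constant $3$; hence $\td(T)+(d+1)$ is bounded by a function of the source parameter $d$. Therefore the same map witnesses a parameterized reduction from \ubp parameterized by $d$ to \sdp on the class of all trees parameterized by $(\td,d)$, and since \ubp is W[1]-hard with respect to $d$~\cite{koutecky}, \sdp is W[1]-hard with respect to $(\td,d)$ on trees. I do not expect any real obstacle: the only thing that needs checking is the constant bound $\td(T)\le 3$, which is immediate from the depth-two structure of $T$.
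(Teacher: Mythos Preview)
Your proposal is correct and is exactly the intended derivation: the paper states Corollary~\ref{cor:td,d} as an immediate consequence of Theorem~\ref{TreesW1} without a separate proof, and your write-up simply makes explicit that the trees produced there have constant tree-depth (depth two, hence $\td(T)\le 3$) while the number of star types is $d+1$, so the same reduction from \ubp parameterized by $d$ bounds the combined parameter $(\td,d)$.
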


\section{ILP formulations}
Let $I=\Gsa$ be an instance of \sdp. It can be seen that $I$ is a yes-instance if and only if there is an orientation $\tau$ on $G$ such that the oriented $G_\tau$ admits an $\sa$ star decomposition where each star is directed from its center to its leaves. Now, we can write an integer linear programming (ILP) for the problem as follows. Suppose that $V=[n]$ and for each vertex $u\in V$ and star length $i\in [d]$, define the variable $x_{i,u}$ as the number of stars of length $s_i$ with the center $u$. Also, for each edge $e=uv\in E$, $u<v$, define the variable $y_e$ which is equal to one if $e$ is oriented in $\tau$ from $u$ to $v$ and is equal to zero, otherwise. Now, it is clear that $I$ is a yes-instance if and only if the following ILP has a feasible solution. 

\begin{align} {\rm ILP1:} & \nonumber\\
&\sum_{u\in V} x_{i,u} = a_i & \forall\, i \in [d] \label{eq:d}\\
&\sum_{i = 1}^{d} s_i x_{i,u} = \sum_{\stackrel{uv\in E}{u<v}} y_{uv}+\sum_{\stackrel{uv\in E}{v<u}} (1-y_{vu})  & \forall\, u \in V \label{eq:V}\\
&x_{i,u} \in \mathbb{Z}^{\geq 0} &\forall\, i \in [d] \text{ and } \forall u\in V \nonumber\\
&y_{uv} \in \{0,1\}  & \forall\, uv\in E \nonumber
\end{align}
\vspace{2mm}

Constraints \eqref{eq:d} assure that the total number of used stars of length $s_i$ on all vertices is equal to $a_i$, for each $i\in [d]$ and Constraints \eqref{eq:V} guarantee that the sum of lengths of stars on vertex $u$ is equal to the out-degree of the vertex $u$, for each $u\in V$. 

Now, we compute the dual tree-depth of ILP1 as follows. Let $A$ be the coefficient matrix of ILP1 which has $n+d$ rows corresponding to $n$ constraints in \eqref{eq:d} and $d$ constraints in \eqref{eq:V}. So, the vertex set of the graph $G_D(A)$ is $U\cup W$ such that $U=\{u_1,\ldots, u_n\}$ corresponds to the constraints in \eqref{eq:V} and $W= \{w_1,\ldots w_d\}$ corresponds to the constraints in \eqref{eq:d}. Note that since the constraints in \eqref{eq:d} have no common variable, $G_D(A)$ induces a stable set on $W$. Also, for two adjacent vertices $u$ and $v$ in $G$, their corresponding constraints in \eqref{eq:V} have the variable $y_{uv}$ in common, so $G_D(A)$ induces the graph $G$ on $U$. Finally every constraint in \eqref{eq:d} have a common variable $x_{i,u}$ with every constraint in \eqref{eq:V}. Therefore, $U$ is complete to $W$ in $G_D(A)$. Hence, $G_D(A)$ is the join of the graph $G$ and an stable set of size $d$ and thus, $\td_D(A)\leq \td(G)+d$ and $\tw_D(A)\leq \tw(G)+d$. Also, $\|A\|_\infty=s=\max_i s_i$. Now, we can apply Theorem~\ref{thm:tdILP} to prove that \sdp is FPT with respect to $(\td(G),s)$ (note that $d\leq s$).\\

\begin{corollary} \label{cor:s,td}
Let $\Gsa$ be an instance of \sdp, $\td$ be the tree-depth of the graph $G$ and $s=\max_i s_i$. The problem can be solved in time $O^*((s + 1)^{2^{\td+d}})$. In particular, \sdp is in FPT with respect to $(\td,s)$.
\end{corollary}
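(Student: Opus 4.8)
The plan is to exhibit an ILP formulation of \sdp whose coefficient matrix $A$ has both small infinity norm and small dual tree-depth, and then invoke Theorem~\ref{thm:tdILP}. Almost all of this work has already been done in the paragraph preceding the statement: ILP1 encodes a yes-instance exactly, it has $\|A\|_\infty = s$, and its dual graph $G_D(A)$ is the join of $G$ with an independent set of size $d$, so $\td_D(A) \le \td(G) + d$. The only thing that remains is to assemble these facts and feed them into the running-time bound.

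First I would recall the bound on $\td_D(A)$ established above. The dual graph $G_D(A)$ is the join $G * \overline{K_d}$; adding a vertex complete to everything raises tree-depth by exactly one, so iterating over the $d$ vertices of $W$ gives $\td_D(A) \le \td(G) + d = \td + d$. Second, I would note $\|A\|_\infty = s$: every coefficient of $x_{i,u}$ in \eqref{eq:V} is some $s_i \le s$, every coefficient in \eqref{eq:d} is $1$, and the $y_{uv}$ variables contribute coefficients $\pm 1$, so the largest absolute value of an entry is $s$. Third, I would substitute these two quantities into Theorem~\ref{thm:tdILP}, which solves \eqref{eq:ILP} in time $O^*((\|A\|_\infty + 1)^{2^{\td_D(A)}})$, obtaining the claimed $O^*((s+1)^{2^{\td + d}})$. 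Finally, since $d \le s \le \Delta$, the exponent $2^{\td + d}$ is bounded by a function of $(\td, s)$ alone, so the bound is $O^*((s+1)^{2^{\td+s}})$, a function of $(\td,s)$ times a polynomial in the input size, which is exactly what fixed-parameter tractability with respect to $(\td, s)$ means.

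One small point worth spelling out, rather than a genuine obstacle, is that ILP1 as written is a feasibility instance with no objective, whereas Theorem~\ref{thm:tdILP} is stated for the optimization form \eqref{eq:ILP}; this is handled by taking $\mathbf{w} = \mathbf{0}$ (or any fixed objective), so feasibility is a special case. One should also confirm that the size $L$ of the ILP instance is polynomial in $|V|$, $|E|$, $d$, and $\log(\max_i a_i)$ — which it is, since ILP1 has $nd + |E|$ variables and $n + d$ constraints with small coefficients — so that the $O^*$ notation legitimately hides only polynomial factors in the original input size. There is no hard step here: the theorem is essentially a corollary of the dual-tree-depth computation already carried out, and the proof is a two-line verification.
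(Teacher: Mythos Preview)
Your proposal is correct and follows exactly the paper's approach: the paper does not even give a separate proof of this corollary, since the preceding paragraph already computes $\td_D(A)\le\td(G)+d$ and $\|A\|_\infty=s$ for ILP1 and notes $d\le s$, so the result is immediate from Theorem~\ref{thm:tdILP}. Your additional remarks about feasibility versus optimization and the polynomial size of the ILP are valid but are not needed beyond what the paper assumes implicitly.
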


Moreover, we can apply Theorem~\ref{thm:twD} to prove that \sdp is in XP with respect to $(\tw,d)$.\\

\begin{corollary} \label{cor:d,tw}
	Let $\Gsa$ be an instance of \sdp, $\tw$ and $\Delta$ be the treewith and the maximum degree of the graph $G$, respectively, $a=\max_i a_i$ and $d$ be the number of star lengths. The problem can be solved in time $O^*(\max({a,\Delta})^{O(\tw+d)})$. In particular, \sdp is in XP with respect to $(\tw,d)$.
\end{corollary}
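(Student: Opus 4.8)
The plan is to apply Theorem~\ref{thm:twD} to the ILP formulation ILP1, in exactly the way Corollary~\ref{cor:s,td} applies Theorem~\ref{thm:tdILP}. From the discussion preceding this corollary we already know that the dual graph $G_D(A)$ of ILP1 is the join of $G$ with an independent set of size $d$, hence $\tw_D(A)\le\tw(G)+d$. So the whole task reduces to bounding the quantity $\Gamma$ of \eqref{eq:Gamma}: once we show $\Gamma=O(\max(a,\Delta))$, Theorem~\ref{thm:twD} immediately gives a running time $\Gamma^{O(\tw_D(A))}\cdot n=\max(a,\Delta)^{O(\tw+d)}\cdot n=O^*(\max(a,\Delta)^{O(\tw+d)})$.

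To estimate $\Gamma$, I would first read off the columns of the coefficient matrix $A$: the column of $x_{i,u}$ has a single entry $1$ in the \eqref{eq:d}-row of type $i$ and a single entry $s_i$ in the \eqref{eq:V}-row of vertex $u$; the column of $y_{uv}$ (with $u<v$) has an entry $-1$ in the \eqref{eq:V}-row of $u$, an entry $+1$ in the \eqref{eq:V}-row of $v$, and zeros elsewhere. Now fix an arbitrary feasible point $\mathbf{x}$ (so $A\mathbf{x}=\mathbf{b}$ and $\mathbf{l}\le\mathbf{x}\le\mathbf{u}$, which in particular forces each $y_{uv}\in\{0,1\}$) and an arbitrary prefix length $k$, and inspect $\sum_{j=1}^{k}A_jx_j$ coordinate by coordinate. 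A coordinate coming from a \eqref{eq:d}-row for type $i$ is a partial sum of the nonnegative numbers $x_{i,u}$, hence lies in $[0,\sum_u x_{i,u}]=[0,a_i]\subseteq[0,a]$. A coordinate coming from the \eqref{eq:V}-row of a vertex $u$ equals $\sum_{i\in I}s_ix_{i,u}-\sum_{v\in P}y_{uv}+\sum_{v\in Q}y_{vu}$ for some index sets $I,P,Q$; using constraint \eqref{eq:V}, $\sum_i s_ix_{i,u}$ equals the out-degree $\deg^+(u)$ of $u$ in the orientation encoded by $\mathbf{y}$, so the nonnegative ``positive part'' of this expression is at most $\deg^+(u)+\deg^-(u)=\deg(u)\le\Delta$ and its nonnegative ``negative part'' is at most $\deg^+(u)\le\Delta$. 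Hence every such coordinate has absolute value at most $\Delta$, so $\|\sum_{j=1}^k A_jx_j\|_\infty\le\max(a,\Delta)$ for all feasible $\mathbf{x}$ and all $k$; since this bound does not depend on the column ordering, $\Gamma\le\max(a,\Delta)$.

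Substituting $\Gamma\le\max(a,\Delta)$ and $\tw_D(A)\le\tw+d$ into Theorem~\ref{thm:twD} yields the claimed running time. For the ``in XP'' conclusion I would finally observe that the hypothesis $\mathbf{s}.\mathbf{a}=|E|$ together with $s_i\ge1$ forces $a\le\sum_i a_i\le|E|\le\binom{|V|}{2}$, while $\Delta\le|V|-1$; consequently $\max(a,\Delta)^{O(\tw+d)}$ is of the form $|V|^{g(\tw,d)}$, which is polynomial in the input size for every fixed value of $\tw+d$.

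I do not expect a genuine obstacle: the result is essentially a plug-in to Theorem~\ref{thm:twD} once $\Gamma$ is controlled. The one point that needs care is the sign bookkeeping in the \eqref{eq:V}-rows — one must invoke feasibility of $\mathbf{x}$ (not merely the box constraints $\mathbf{l}\le\mathbf{x}\le\mathbf{u}$) to recognize each partial row-sum as a difference of nonnegative quantities each bounded by a degree of $G$, which is exactly what replaces the naive estimate (proportional to the number of variables) by the desired $O(\Delta)$.
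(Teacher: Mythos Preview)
Your proposal is correct and follows essentially the same approach as the paper: apply Theorem~\ref{thm:twD} to ILP1, use the already-established bound $\tw_D(A)\le\tw(G)+d$, and show $\Gamma\le\max(a,\Delta)$ by bounding the \eqref{eq:d}-rows via nonnegativity and $\sum_u x_{i,u}=a_i$, and the \eqref{eq:V}-rows via feasibility (so that $\sum_i s_i x_{i,u}=\deg^+(u)$) together with the degree bound~$\Delta$. Your write-up is in fact a bit more explicit than the paper's---you spell out the column structure, the positive/negative parts, and the XP justification---but the argument is the same.
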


\begin{proof}
In the light of Theorem~\ref{thm:twD} and the fact that in ILP1, we have $\tw_D(A)\leq \tw(G)+d$, it is just enough to prove that in ILP1, $\Gamma\leq \max({a},\Delta)$ ($\Gamma$ is defined in \eqref{eq:Gamma}). For this, let $A$ be the coefficient matrix of ILP1 and let $\mathbf{z}=(\mathbf{x},\mathbf{y})$ be a feasible solution for ILP1. For each $i\in[d]$, we have $\sum_{u\in V} x_{i,u} = a_i$, thus $|A_i \mathbf{z}|=a_i$. Also, $x_{i,u}\geq 0$, so for each column $k$ of $A$, we have $|\sum_{j=1}^k A_{ij} z_j|\leq a_i$. Moreover, for each $u\in V$, because of \eqref{eq:V}, we have
\[
\left|\sum_{j=1}^k A_{uj}z_j\right| \leq \max\left(\sum_{i = 1}^{d} s_i x_{i,u}+\sum_{\stackrel{uv\in E}{v<u}} y_{uv} , \sum_{\stackrel{uv\in E}{u<v}} y_{uv}\right) \leq \Delta. 
\] 
Therefore, $\Gamma \leq \max(a,\Delta)$ and we are done. 
\end{proof}

Theorems~\ref{thm:W1} and \ref{TreesW1} guarantee that \sdp is W[1]-hard with respect to the parameters $\vc$ and $d$ separately. In the following, we prove that the problem happens to be FPT with respect to the combined parameter $(\vc,d)$. First, it should be noted that ILP1 and Theorem~\ref{thm:tdILP} is not sufficient to prove FPT-ness of the problem with respect to  $(\vc,d)$ because if $A$ is the coefficient matrix of ILP1, then  $\|A\|_\infty=s=\max_i s_i$ and $s$ could be much larger than any function of $\vc$ and $d$ (in fact, vertices in the vertex cover might have large degrees and can receive stars of length much larger than any function of $\vc$ and $d$). On the other hand, the number of variables in ILP1 is $dn+m$ which is not bounded by the parameters and so Theorem~\ref{thm:ILPvar} is not applicable.  Hence, here we are going to write a new ILP for the problem, such that the number of variables is bounded by a function of $\vc$ and $d$ and thus, Theorem~\ref{thm:ILPvar} implies the assertion. 

\begin{theorem} \label{FPT:vc,d}
\sdp is FPT with respect to the parameter $(\vc,d)$. In particular, it can be solved in time at most $O^*(2^{O(d.\log d.\vc.8^{\vc})})$.
\end{theorem}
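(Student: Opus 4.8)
The plan is to design a second integer program for the problem --- call it ILP2 --- whose number of variables is bounded by a function of $\vc$ and $d$ alone, and then invoke Lenstra's algorithm (Theorem~\ref{thm:ILPvar}). ILP1 is unsuitable for this, since it has $\Theta(nd+m)$ variables, far more than any function of the parameters. Fix a vertex cover $C$ of $G$ with $|C|=\vc$ (such a $C$ is computable in time FPT in $\vc$, or may be assumed preprocessed into the input) and set $k=\vc$. Put $I=V\setminus C$, an independent set, and classify the vertices of $I$ by \emph{type}: $v,v'\in I$ have the same type iff $N_G(v)\cap C=N_G(v')\cap C$, so there are at most $2^{k}$ types, each a subset $T\subseteq C$; write $n_T$ for the number of type-$T$ vertices. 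Two structural observations drive the construction. First, in every $\sa$-star decomposition each edge lies in a star centered at one of its two endpoints, and since $I$ is independent every star with center a type-$T$ vertex of $I$ has all its leaves in $C$, hence length at most $|T|\le k$. Second, a decomposition is entirely determined by the combinatorial data ``for each vertex $u$, which incident edges lie in stars centered at $u$, and how these edges are grouped by length,'' because the leaves of a star centered at $u$ only have to be distinct neighbours of $u$; hence any grouping of the chosen edges into the prescribed lengths realizes such data by an actual family of stars.

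Building ILP2: for a type $T$, call $(R,\lambda)$ a \emph{local pattern of type $T$} if $R\subseteq T$ and $\lambda$ is a multiset over $\{i\in[d]:s_i\le |T|\}$ with $\sum_{i\in\lambda}s_i=|T|-|R|$. It encodes a type-$T$ vertex $v$ that covers the $|T|-|R|$ edges from $v$ to $T\setminus R$ by stars centered at $v$, grouped according to $\lambda$, and leaves the edges from $v$ to $R$ to be covered from $C$. Since every part of $\lambda$ is at least $1$ and the distinct part-values lie in the fixed set $\{s_1,\dots,s_d\}$, the number of admissible $\lambda$ with a fixed sum $m\le k$ is at most the partition number $p(m)=2^{\Oh(\sqrt{k})}$; summing over $R\subseteq T$ and over $T\subseteq C$ gives at most $3^{k}\cdot 2^{\Oh(\sqrt{k})}=2^{\Oh(k)}$ local patterns in total. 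The variables of ILP2 are: an integer $z_{T,R,\lambda}\ge 0$ for each type $T$ with $n_T>0$ and each local pattern $(R,\lambda)$ of $T$; an integer $x_{i,c}\ge 0$ for $i\in[d]$, $c\in C$ (the number of length-$s_i$ stars centered at $c$); and $o_e\in\{0,1\}$ for each $e\in E(G[C])$ (which endpoint of $e$ covers it). This is $2^{\Oh(\vc)}+\Oh(d\,\vc)+\Oh(\vc^{2})$ variables, a function of $\vc$ and $d$ only.

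The constraints encode the three natural requirements. (i) Every vertex of $I$ gets one pattern: $\sum_{(R,\lambda)}z_{T,R,\lambda}=n_T$ for each type $T$. (ii) The total number of stars of each length is correct: for each $i\in[d]$,
\[
\sum_{c\in C} x_{i,c} + \sum_{T}\ \sum_{(R,\lambda)} m_i(\lambda)\, z_{T,R,\lambda} = a_i ,
\]
where $m_i(\lambda)$ is the multiplicity of $i$ in $\lambda$. (iii) At each $c\in C$, the edges covered by stars centered at $c$ are exactly the edges from $c$ to $I$ left for $C$ there, together with the $G[C]$-edges oriented out of $c$:
\[
\sum_{i=1}^{d} s_i\, x_{i,c}
= \sum_{T\ni c}\ \sum_{(R,\lambda):\, c\in R} z_{T,R,\lambda}
+ \sum_{cc'\in E(G[C]):\, c<c'} o_{cc'}
+ \sum_{c'c\in E(G[C]):\, c'<c} (1-o_{c'c}).
\]
Using the two structural observations above --- in particular that each $I$-neighbour of $c$ contributes at most one edge, so all the edges counted on the right-hand side have pairwise distinct endpoints and can be split arbitrarily among the $x_{i,c}$ stars --- one checks that ILP2 is feasible if and only if $G$ admits an $\sa$-star decomposition: given a feasible solution, distribute the type-$T$ vertices among patterns according to the $z_{T,R,\lambda}$ and read off the stars; conversely a decomposition gives a solution by taking $R_v$ to be the set of neighbours of $v$ whose joining edge is covered from $C$ and $\lambda_v$ the multiset of lengths of stars centered at $v$.

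Finally, ILP2 has $p=2^{\Oh(\vc)}+\Oh(d\,\vc)$ variables and encoding size $L$ polynomial in the input, so by Theorem~\ref{thm:ILPvar} it is solved in $p^{2.5p+o(p)}\cdot L = 2^{\Oh(p\log p)}\cdot L$ arithmetic operations; a routine estimate of $p\log p$ (writing $p\le 8^{\vc}+\Oh(d\,\vc)+\Oh(\vc^{2})$ and $\log p=\Oh(\vc+\log d)$) yields the stated bound $O^{*}\!\bigl(2^{\Oh(d\log d\cdot \vc\cdot 8^{\vc})}\bigr)$, hence membership in \FPT\ for the parameter $(\vc,d)$. The main obstacle is the modelling step: one must choose a set of only $f(\vc,d)$ local patterns --- which is possible only because $I$-centered stars are short, so a pattern's star-multiset has weight at most $k$ --- and then argue that the aggregated constraints (i)--(iii) are \emph{sufficient}, not merely necessary, for decomposability; this sufficiency is exactly where the ``leaves only need to be distinct neighbours, so any grouping works'' observation is used. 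Once the right pattern set is in place, the if-and-only-if is bookkeeping.
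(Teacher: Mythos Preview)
Your proposal is correct, and it takes a genuinely different (and cleaner) route than the paper's proof.

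Both arguments build an ILP with $f(\vc,d)$ variables and invoke Lenstra, but they differ in how the independent-set vertices are aggregated. The paper indexes its $z$-variables by the \emph{out}-neighbourhood $T=N^+(v)\subseteq C$ of an $I$-vertex together with a star-multiplicity vector $\mathbf{b}$. Because many $I$-vertices with different full neighbourhoods can have the same out-neighbourhood, feasibility of the paper's ILP does not immediately give a realisation by actual vertices; the paper therefore adds a family of Hall-type inequalities (one for every $\mathcal{T}\subseteq 2^{C}$, so $2^{2^{\vc}}$ constraints) and invokes a defect-Hall/SDR lemma (Theorem~\ref{thm:SDR}) in the sufficiency direction to produce the required assignment $Z_T\subseteq N_T$.

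You instead index by the \emph{full} neighbourhood $T=N_G(v)$ together with a ``residual'' set $R\subseteq T$ and a partition $\lambda$. Since vertices of the same full-neighbourhood type are completely interchangeable, your equality $\sum_{(R,\lambda)}z_{T,R,\lambda}=n_T$ is automatically realisable by an arbitrary distribution of the $n_T$ vertices among patterns; no Hall condition or SDR lemma is needed, and the number of constraints drops from doubly exponential in $\vc$ to $2^{\vc}+d+\vc$. The price is a mild blow-up in the number of $z$-variables (an extra factor for the choice of $R\subseteq T$, giving $3^{\vc}\cdot 2^{O(\sqrt{\vc})}$ rather than the paper's $2^{\vc}\cdot|\mathcal{B}_{|T|}|$), but this still sits comfortably inside the stated $8^{\vc}$ bound. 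In short: the paper trades more constraints and an auxiliary matching lemma for slightly fewer variables, while your approach is more self-contained. Your sufficiency argument---``distribute type-$T$ vertices among patterns and read off the stars''---is correct, though in a formal write-up you should spell out the edge-partitioning at each $c\in C$ a bit more explicitly, as the paper does.
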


\begin{proof}
First, let $I=\Gsa$ be an instance of \sdp and $C\subset V$ be a vertex cover of size $\vc$. For convenience, set $C=[\vc]$. Also, let $ S=V\setminus C $ and $d'=\max\{i\in [d]: s_i\leq \vc\}$. Note that for each vertex $v\in S$, the degree of $v$ is at most $\vc$ and cannot be the center of any star of length  $s_i$, $i>d'$. For each $i\in [d]$ and each $u\in C$, define the variable $x_{i,u}$ which stands for the number of stars of length $s_i$ with center $u$. Also, let $\tau$ be an orientation on $E(G)$ which induces directed graph $G_\tau$. For each $e=uv\in E$ such that $u,v\in C$ and $u<v$, define the variable $y_e$ which is equal to one if $e$ is oriented in $\tau$ from $u$ to $v$ and is equal to zero, otherwise. For each $t\in [0,\vc]$, define $\mathcal{B}_t=\{\mathbf{b} \in [0,\vc]^{d'}: \sum_{i=1}^{d'} b_is_i=t \} $.
Now, for each subset $T\subseteq C$ and each vector 
$\mathbf{b} \in \mathcal{B}_{|T|}$, define a variable $z_{_{\mathbf{b},T}}$ which stands for the number of vertices $v\in S$ where $N^+(v)=T$ and for each $i\in [d']$, there are exactly $b_i$ stars of  length $s_i$ on $v$. Also, for each $T\subseteq C$, $i\in [d']$, $b\in [0,\vc]$, define
\begin{align}
z_{_T}&= \sum_{\mathbf{b}\in \mathcal{B}_{|T|}} z_{_{\mathbf{b},T}},\label{eq:zT}\\
z_{_{b,i}}&=\sum_{T\subseteq C}\ \sum_{{\mathbf{b}\in \mathcal{B}_{|T|}}:\, {b_i=b}} z_{_{\mathbf{b},T}}. \label{eq:zbi}
\end{align} 

Finally, for each $T\subseteq C$, define $N_T=\{v\in S:\, T\subseteq N(v) \}$ and for each $u\in C$, define $d_u=|E(u,S)|$. Now, we are ready to introduce the following ILP which has a feasible solution if and only if $I$ is a yes-instance. 

\begin{align} {\rm ILP2:} & \nonumber\\
& \sum_{T\subseteq C} z_{_T}= |S| & \label{eq:lp2-1} \\
&\sum_{u\in C} x_{i,u}+\sum_{b=0}^{\vc} b\, z_{_{b,i}} = a_i & \forall\, i \in [d'] \label{eq:lp2-2} \\
&\sum_{u\in C} x_{i,u} = a_i & \forall\, i \in [d'+1,d] \label{eq:lp2-3} \\
&\sum_{i = 1}^{d} s_i x_{i,u} = \sum_{\stackrel{uv\in E(C)}{u<v}} y_{uv}+\sum_{\stackrel{uv\in E(C)}{v<u}} (1-y_{vu})+d_u- \sum_{\stackrel{T\subseteq C}{ u\in T}} z_{_T}  & \forall\, u \in C \label{eq:lp2-4} \\
&\sum_{T\in \mathcal{T}} z_{_T}\leq  |\bigcup_{T\in \mathcal{T}} N_T| & \forall \mathcal{T}\subseteq  2^{C} \label{eq:lp2-5} \\
&x_{i,u} \in \mathbb{Z}^{\geq 0} &\forall\, i \in [d], \forall u\in C \nonumber\\
&y_{uv} \in \{0,1\}  & \forall\, uv\in E(C) \nonumber \\
&z_{_{\mathbf{b},T}} \in \mathbb{Z}^{\geq 0}  & \forall\, T\subseteq C, \mathbf{b} \in \mathcal{B}_{|T|}\nonumber 
\end{align}

Note that $z_{_T}$ is the number of vertices $v$ in $S$ with $N^+(v)=T$, so Constraint \eqref{eq:lp2-1} assures that every vertex in $S$ is enumerated in some $z_{_T}$, $T\subseteq C$. Constraints~\eqref{eq:lp2-2} and \eqref{eq:lp2-3} guarantee that the total number of used stars of length $i$ is exactly equal to $a_i$, for each $i\in [d]$ (note that vertices in $S$ can receive only stars with length $s_i$, $i\in [d']$, and vertices in $C$ can receive stars of all lengths $s_i$, $i\in [d]$). Constraints~\eqref{eq:lp2-4} show that the sum of lengths of stars on each vertex $u\in C$ is equal to $N^+(u)$. Finally, Constraints~\eqref{eq:lp2-5} hold because if a vertex $v\in S$ is enumerated in some $z_{_T}$, then $v\in N_{T}$. Now, we prove that $I$ is a yes-instance if and only if ILP2 has a feasible solution. 

\textbf{Proof of Necessity}. Suppose that $I$ is a yes-instance and consider an \sasd of $G$. Also, let $G_\tau$ be its corresponding orientation on $G$. Assign to each of the variables $x_{i,u}$, $y_{uv}$ and $z_{_{\mathbf{b},T}}$ its corresponding value according to their definitions and define $z_{b,i}$ and $z_{_T}$ as in \eqref{eq:zT} and \eqref{eq:zbi}. Also, for each $T\subseteq S$, define $Z_T=\{v\in S:\ N^+(v)=T \}$. It is clear that all $Z_T$'s are disjoint, $|Z_T|=z_{_T}$ and each vertex $v\in S$ is in some $Z_T$, therefore, Constraint~\eqref{eq:lp2-1} holds.  
Now, for each star size $s_i$, $i\in [d']$, there are exactly $x_{i,u}$ stars of length $s_i$ on each vertex $u\in C$ and the number of stars of length $s_i$ on the vertices in $S$ is equal to $\sum_{b=0}^{\vc} b\, z_{_{b,i}} $ (because each vertex $v\in S$ with exactly $b$ stars of length $s_i$, for some $b$, is enumerated in $z_{_{b,i}}$). Thus, Constraints~\eqref{eq:lp2-2} hold. On the other hand, for each star length $s_i$, $i\in [d'+1,d]$, no vertices in $S$ can have a star of length $s_i$, so Constraints~\eqref{eq:lp2-3} also hold. Now, note that for each $u\in C$, $\sum_{i=1}^d s_i x_{i,u}$ is equal to $|N^+(u)|=|N^+_S(u)|+|N^+_C(u)|$. Also, it is clear that $|N^+_C(u)|= \sum_{{uv\in E(C)},\, {u<v}} y_{uv}+\sum_{{uv\in E(C)},\, {v<u}} (1-y_{uv}) $ and $|N^+_S(u)|= d_u - \sum_{{T\subseteq C}, { u\in T}} z_{_T}$ (since $N^-_S(u)= \cup_{ T\ni u} Z_T$ and $Z_T$'s are disjoint). Finally, for each $T\subseteq C$, $Z_T\subseteq N_T$ and $Z_T$'s are disjoint, so Constraints~\eqref{eq:lp2-5} hold. This implies that there exists a feasible solution to ILP2, as desired.

\textbf{Proof of sufficiency}. Now, suppose that $(x_{i,u},y_{uv},z_{_{\mathbf{b},T}})$ is a feasible solution to ILP2 and for each $T\subseteq C$, define $z_{_T}$ as in \eqref{eq:zT}. First, we define an orientation $\tau$ on $G$ as follows. If $e=uv$ is an edge with both endpoints in $C$ where $u<v$, then orient $e$ from $u$ to $v$ if and only if $y_{uv}=1$. Now, apply Theorem~\ref{thm:SDR} by setting $\mathcal{F}=(N_T: T\subseteq C)$ and $\eta(T)=z_{_T}$, for each $T\subseteq C$. Since Constraints~\eqref{eq:lp2-5} hold, there exists an $\eta$-SDR for $\mathcal{F}$, i.e. for each $T\subseteq C$, there exists a subset $Z_T\subseteq N_T$ such that $|Z_T|=z_{_T}$ and $Z_T$'s are disjoint. Note that because of Constraint~\ref{eq:lp2-1}, the sets $Z_T$, $T\subseteq C$, partition the set $S$. Now, for each edge $e=uv$, where $u\in C$ and $v\in Z_T$, for some $T\subseteq C$, orient $e$ from $v$ to $u$ if $u\in T$ and from $u$ to $v$ if $u\in C\setminus T$. This gives rise to an orientation $\tau$ on $G$. Now, we give an \sasd for $G_\tau$. First, because of \eqref{eq:zT}, the set $Z_T$ can be partitioned into arbitrary disjoint subsets $Z_{\mathbf{b},T}$, $\mathbf{b}\in \mathcal{B}_{|T|}$, where $|Z_{\mathbf{b},T}|=z_{_{\mathbf{b},T}}$. Now, for each $i\in [d]$ and $u\in C$, assign $x_{u,i}$ stars of length $s_i$ to the vertex $u$. Also, for each $i\in [d']$ and $v\in S$, if $v\in Z_{\mathbf{b},T}$, for some $T\subseteq C$ and $\mathbf{b}\in \mathcal{B}_{|T|}$, then assign $b_i$ stars of length $s_i$ to the vertex $v$. Since for each $v\in Z_{\mathbf{b},T} $, $|N^+(v)|=|T|=\sum_{i=1}^{d'} b_is_i$,  the outdegree of $v$ is equal to the sum of lengths of its assigned stars. On the other hand, for each $u\in C$, $|N^+(u)|=  \sum_{{u<v}} y_{uv}+\sum_{{v<u}} (1-y_{uv})+d_u- \sum_{{T\subseteq C},{ u\in T}} z_{_T}$ and so Constraints~\eqref{eq:lp2-4} guarantee that the outdegree of $u$ is equal to the sum of lengths of its assigned stars. Finally, because of Constraints~\eqref{eq:lp2-2} and \eqref{eq:lp2-3} all stars have been assigned. Hence, $G$ admits an \sasd.

\textbf{Time analysis.} ILP2 has $d.\vc$ variables $x_{i,u}$, at most $\vc^2$ variables $y_{uv}$, $\sum_{t=0}^{\vc} \binom{\vc}{t} |\mathcal{B}_t|$ variables $z_{_{\mathbf{b},T}}$. Therefore, the total number of variables is at most
\[
d.\vc+\vc^2+ \sum_{t=0}^{\vc} \binom{\vc}{t} \binom{t+d'}{d'}\leq  d.\vc+\vc^2+ 2^{3\vc}=O(d.2^{3\vc}).
\]
Hence, by Theorem~\ref{thm:ILPvar}, ILP2 can be solved in $2^{O(d.\log d.\vc. 2^{3\vc})}$. Also, since Constraints \ref{eq:lp2-5} hold, applying Theorem~\ref{thm:SDR}, we can find an $\eta$-SDR for $\mathcal{F}=(N_T: T\subseteq C)$ and $\eta(T)=z_{_T}$ in time $O(|S|^3)$. Using this SDR along with variables $y_{uv}$ we can introduce an orientation $\tau$ on $G$. Also, according to the above argument in the proof of sufficiency, assigning the stars to the vertices of $G$  using variables $x_{i,u}$ and $z_{_{\mathbf{b},T}}$ can be done in polynomial time. Hence, the whole process can be done in $O^*(2^{O(d.\log d.\vc. 2^{3\vc})})$. 
\end{proof}

\begin{remark}\label{rem:multi}
We can also define the \sdp problem for multigraphs and multi-stars.
Let $G=(V,E)$ be a loopless multigraph. Also, let $(\mathbf{s},\mathbf{a})\in \mathbb{Z}_+^d\times \mathbb{Z}_+^d$. By a multi-star of length $s_i$, we mean a bipartite multigraph whose one part is of size one and its number of edges is equal to $s_i$. So, the \msdp problem asks whether there exists a decomposition of $G$ into $a_1$ multi-stars of length $s_1$, $\cdots$, $a_d$ multi-stars of length $s_d$.
Note that if $G=(V,E)$ is a multigraph with vertex cover $C$ such that all edges in $E(C,V\setminus C)$ are simple (i.e. all parallel edges are inside $C$), then \msdp\ can be solved for $G$ with the same runtime as in   Theorem~\ref{FPT:vc,d}, because in ILP2, we can replace the constraint $y_{uv}\in \{0,1\}$ with $y_{uv}\in [0,m_{uv}]$, where $m_{uv}$ is the multiplicity of the edge $uv$ and everything works accordingly. We will need this fact in the proof of Theorem~\ref{thm:s,nd}.
\end{remark}

In previous result, we proved that \sdp is FPT with respect to $(\vc,d)$ and in Theorem~\ref{thm:W1} and \ref{TreesW1}, we proved that \sdp is W[1]-hard with respect to $\vc$. In the following theorem, we will prove that the problem is in XP with respect to $\vc$.

\begin{theorem}\label{XPvc}
\sdp is in \XP with respect to the parameter $\vc$. More precisely, if a vertex cover of size $\vc$ is given, then \sdp can be solved in time  $O({(n+5^{\vc})}^{5^{\vc}} n^{\vc})$, where $n$ is the number of vertices of the input graph. 
\end{theorem}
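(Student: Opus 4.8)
The plan is to exploit that $S\df V\setminus C$ is an independent set: every $v\in S$ sends all of its at most $\vc$ incident edges into $C$, so any star centred at a vertex of $S$ has length at most $\vc$, and two vertices of $S$ with the same neighbourhood in $C$ are interchangeable. First I would group the vertices of $S$ by their neighbourhood, writing $m_T$ for the number of $v\in S$ with $N(v)=T$, for each $T\subseteq C$ (at most $2^{\vc}$ groups). For a fixed orientation of $G$, call the \emph{configuration} of $v\in S$ the triple $(N(v),N^-(v),\mathbf{b})$, where $N^-(v)\subseteq N(v)$ is the set of in-neighbours of $v$ and $\mathbf{b}=(b_i)$ records how many stars of length $s_i$ (necessarily $s_i\le\vc$) are centred at $v$; it satisfies $\sum_i b_is_i=|N(v)|-|N^-(v)|$. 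Summing $\binom{\vc}{t}$ over the size $t$ of $N(v)$, then $\binom{t}{k}$ over $k=|N^+(v)|$, and bounding the number of admissible $\mathbf{b}$ by the number of partitions of $k$, one checks that the number of configurations, over all $T$, is at most $5^{\vc}$.

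The algorithm then branches as follows. (1) Guess the orientation of the at most $\binom{\vc}{2}$ edges inside $C$; this fixes, for each $u\in C$, its out-degree $\delta^+_C(u)$ inside $C$ (the number of branches here depends only on $\vc$). (2) Guess, for each configuration $c$, an integer $n_c\in\{0,\dots,n\}$ with $\sum_{c:\,N(c)=T}n_c=m_T$ for every $T$; since there are at most $5^{\vc}$ configurations, there are at most $(n+5^{\vc})^{5^{\vc}}$ such guesses. From the $n_c$ one computes the full out-degree $D_u\df\delta^+_C(u)+\sum_{c:\,u\in N^-(c)}n_c$ of each $u\in C$ and the number $a_i'\df a_i-\sum_c b_i(c)\,n_c$ of stars of length $s_i$ not yet used on $S$; discard the branch unless every $a_i'\ge 0$. (3) Decide whether the remaining multiset of stars ($a_i'$ stars of length $s_i$ for each $i$) can be distributed among the $\vc$ vertices of $C$ so that $u$ receives total length exactly $D_u$; this is an exact bin-packing instance with $\vc$ bins whose capacities and item sizes are all at most $|E|\le n^2$, solved by a dynamic programme over the vector of partial bin-loads in $\prod_{u\in C}\{0,\dots,D_u\}$, in time $n^{O(\vc)}$. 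Accept iff some branch succeeds. The total running time is $2^{\binom{\vc}{2}}\cdot(n+5^{\vc})^{5^{\vc}}\cdot n^{O(\vc)}=O((n+5^{\vc})^{5^{\vc}}\,n^{\vc})$.

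Correctness is then a matter of matching solutions to branches. Given an $(\mathbf{s},\mathbf{a})$-star decomposition, orient each edge away from the centre of its star; the induced orientation restricted to $C$ is one of the guesses in (1), the resulting configurations of the vertices of $S$ and the counts $n_c$ satisfy the constraints in (2), and the stars centred in $C$ witness feasibility of (3). Conversely, from a successful branch, orient the edges of $C$ as guessed, assign configurations to the $m_T$ vertices of each group $T$ consistently with the $n_c$ (which also orients every $C$--$S$ edge), split the $|N^+(v)|$ out-edges of each $v\in S$ into $b_i(c)$ stars of length $s_i$, and split the $D_u$ out-edges of each $u\in C$ into $x_{i,u}$ stars of length $s_i$ according to the solution of (3); every edge then lies in exactly one star, the one centred at its tail, and the star counts total $\sum_c b_i(c)n_c+\sum_{u\in C}x_{i,u}=a_i$.

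The main obstacle is to pin down exactly what a configuration must record: enough that the out-degrees of the vertices of $C$ and the global star-length counts can be recovered from the numbers $n_c$ alone, yet little enough that there are only $f(\vc)$ configurations; and then to verify both the $5^{\vc}$ counting bound and that the residual problem on $C$ is a constant-bin exact bin-packing instance, hence pseudo-polynomial --- and so, for fixed $\vc$, polynomial-time --- solvable.
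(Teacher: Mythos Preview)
Your proposal is correct and follows essentially the same approach as the paper: group the vertices of the independent set $S=V\setminus C$ by their neighbourhood in $C$, enumerate all possible ``configurations'' $(N(v),N^+(v),\mathbf{b})$ (the paper calls the pair $(\mathbf{b},B)$ with $B=N^+(v)$ a \emph{label}), branch over the multiplicities of each configuration and over the orientation of $E(C)$, and finish with a pseudo\hy polynomial dynamic programme that distributes the remaining stars over the $\vc$ bins $u\in C$ with capacities $D_u$. The only cosmetic difference is that you aggregate the configurations over all neighbourhoods $T\subseteq C$ before counting (and bound the number of admissible $\mathbf{b}$ by $p(k)$), whereas the paper counts labels per neighbourhood via the identity $\sum_k\binom{|A|}{k}\binom{2k}{k}\le 5^{|A|}$; both routes yield the claimed $5^{\vc}$ bound and the same running time.
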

\begin{proof}
Let $G=(V,E)$ and $(\mathbf{a},\mathbf{s})$ be an instance of \sdp. Also, let $C=\{u_1,\ldots, u_{\vc}\}\subset V$ be a minimum vertex cover of size $\vc$ and $I=V\setminus C$ (note that $C$ can be found in \FPT time). Let $d'$ be the maximum $i$ such that $s_i\leq \vc$ and for each $A\subseteq C$, define
\begin{align*}
I_A&= \{x\in I:\ N(x)=A \},\\
\CC_A&= \{(\mathbf{b},B):\ \mathbf{b}=(b_1,\ldots,b_{d'}), \forall i\in[d']\ b_i\leq a_i, B\subseteq A \text{ and } \sum_{i=1}^{d'} s_i b_i= |B|
\}.
\end{align*}
To each vertex $x\in I_A$, we assign a label $(\mathbf{b},B)$ in $\mathcal{C}_A$ which means that there are exactly $b_i$ stars of length $s_i$ with the center $x$, $i\in[d']$, and $B$ consists of all the leaves of these stars. For a fixed $A\subseteq C$, consider a solution of the following equation
\begin{equation}
\label{eq:eqIA}
\sum_{(\mathbf{b},B)\in \mathcal{C}_A} \alpha_{(\mathbf{b},B)} =|I_A|.
\end{equation}
This means that there are exactly $ \alpha_{(\mathbf{b},B)} $ vertices in $I_A$ with the label $(\mathbf{b},B)$. Now, for each $A\subseteq C$, fix a solution of \eqref{eq:eqIA} and assign stars with the center $x$ for each $x\in I_A$ according to its label $(\mathbf{b},B)$. Also, let $G'$  be the graph obtained from $G$ by removing edges of all assigned stars (i.e. all edges in $E(x,B )$) and $\mathbf{a'} $ be the list obtained from $\mathbf{a}$ by subtracting the multiplicities of the assigned stars (i.e. $a'_i=a_i-b_i$ for $i\in [d']$ and $a'_i=a_i$ for $i\in [d'+1,d]$). Now, fix an orientation $\tau$ on the edges of $G'$ with both endpoints in $C$ and orient all the remaining edges in $G'$ from $C$ to $I$ and let $G'_\tau$ be the directed graph obtained from $G'$ with such orientation. Now, in the final step we check that if $G'_\tau$ admits an $(\mathbf{a}',\mathbf{s})$ star decomposition.
This can be done by a simple dynamic programming as follows.

For each $i\in [\vc]$, let $d_i$ be the out-degree of the vertex $u_i$ in $G'_\tau$. Therefore, $\mathbf{a}'.\mathbf{s}= \sum_{i=1}^{\vc} d_i$. We are going to distribute the remaining stars over the vertices in $C$ such that the sum of star lengths assigned to each vertex is is equal to its out-degree.  

To do this, let $T=\mathbf{a}'.\mathbf{1}$ be the number of remaining stars and let $\eta: [T]\to \{s_1,\ldots, s_d\} $ be a function such that $|\eta^{-1}(s_i)|= a'_i$, for each $i\in [d]$, and  for every $i \in \{1,\ldots,T\}$ and  integers $ x_j \in [0,d_j]$, $ j\in [ \vc-1]$, define

\[
f(i,x_1,\ldots, x_{\vc-1})=
\begin{cases}
1 & \text{if }\exists I_1,\ldots, I_{\vc-1}\subseteq \{1,\ldots,i\}\  s.t.\  I_k\cap I_l =\emptyset, k\neq l, \\
& \text{and } \sum_{r\in I_j} \eta(r) =x_j,  j\in [\vc-1], \\ 
0 & \text{otherwise.}
\end{cases}
\]
It is clear that $G'_\tau$ admits an $(\mathbf{a}',\mathbf{s})$ star decomposition if and only if $f(T,d_1,\ldots,d_{\vc-1})=1$. Now, we compute the function $f$ recursively using the following dynamic programming. 

First, note that $f(1,x_1,\ldots, x_{\vc-1})=1$ if and only if all $x_j$'s are equal to zero except at most one $x_{j_{_0}}$, where $x_{j_{_0}}=\eta(1)$. 
Moreover, note that if $f(i,x_1,\ldots, x_{\vc-1})=1$, then clearly we have  $f(i+1,x_1,\ldots ,x_{\vc-1})=1$. Also, if there is some $j\in [\vc-1] $
such that $x_j\geq \eta(i+1)$ and $f(i,x_1,\ldots,x_{j-1}, x_j-\eta({i+1}),x_{j+1},\ldots, x_{\vc-1})=1$, then we can replace $I_j$ with $I_j\cup \{i+1\}$ and so again $f(i+1,x_1,\ldots ,x_{\vc-1})=1$. If none of the above two cases happen, then clearly $f(i+1,x_1,\ldots ,x_{\vc-1})=0$. Therefore, 
\begin{align*}
f(i+1,x_1,\ldots ,x_{\vc-1})=1 \text{ iff} & \text{ either } f(i,x_1,\ldots, x_{\vc-1})=1, \text{ or } \exists\, j\in[\vc-1], \\
& f(i,x_1,\ldots,x_{j-1}, x_j-\eta({i+1}),x_{j+1},\ldots, x_{\vc-1})=1.
\end{align*}

Hence, we can compute the function $f$ recursively from $i=1$ to $i=T$. It takes $O(\vc)$ to compute each cell of the table. So, the running time of the dynamic programming is $O(T.n^{\vc-1}.\vc)= O(\vc.n^{\vc+1})$, since $T\leq n^2$.

Now, we analyze the time complexity of the algorithm. For each $A\subseteq C$ and $B\subseteq A$ with $|B|=k$, the number of labels $(\mathbf{b},B)$ in $\mathcal{C}_A$ is at most $(k+1)(k/2+1)\cdots (k/k+1)={(2k)!}/{k!^2}$ (because $0\leq b_i\leq k/i$). Therefore, 
\begin{equation*}
|\mathcal{C}_A|\leq \sum_{k=0}^{|A|}\binom{2k}{k}\binom{|A|}{k}\leq \sum_{k=0}^{|A|} 2^{2k}\binom{|A|}{k}=5^{|A|}\leq 5^{\vc}.
\end{equation*}
So, the number of  solutions to Equation \eqref{eq:eqIA} is at most $ \binom{n+5^{\vc}-1}{5^{\vc}	-1}\leq \frac{ {(n+5^{\vc})}^{5^{\vc}-1}}{(5^{\vc}-1)!} $.
Since we take a solution of \eqref{eq:eqIA} for each $A\subseteq C$ and take an orientation $\tau$ (which has at most $2^{\vc^2}$ choices) and then solve a dynamic programming, the running time of the whole algorithm is at most 

\[\frac{ {(n+5^{\vc})}^{5^{\vc}-1}}{(5^{\vc}-1)!}. 2^{\vc^2}.2^{\vc}. O(\vc. n^{\vc+1})\leq O({(n+5^{\vc})}^{5^{\vc}} n^{\vc}). \]
%
\end{proof}

\section[Fixed parameter tractability with respect to (nd,s)]{Fixed parameter tractability with respect to $(\nd,s)$}\label{sec:nd,s}
In this section, we prove that \sdp is FPT with respect to $(\nd,s)$.
\begin{theorem} \label{thm:s,nd}
	\sdp is FPT with respect to the parameter $(\nd,s)$.
\end{theorem}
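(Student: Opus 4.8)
The plan is to mimic the structure of the proof of Theorem~\ref{FPT:vc,d}: write an ILP with a bounded number of variables, so that Lenstra's algorithm (Theorem~\ref{thm:ILPvar}) gives the result, but now organizing everything around the neighborhood-diversity partition instead of a vertex cover. Recall that a graph $G$ of neighborhood diversity $k=\nd$ admits a partition $V=V_1\cup\cdots\cup V_k$ into types, where each $V_j$ is either a clique or an independent set, and for any two types $V_j,V_{j'}$ either all edges between them are present or none are. Such a partition can be found in polynomial time. Since the number of types is bounded by the parameter, it suffices to describe, for each type, the ``profile'' of orientations and star assignments on that type up to the symmetry between vertices of the same type.

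The key issue is that a single type $V_j$ may be huge and its vertices may have large degree, so we cannot afford a variable per vertex nor per possible out-degree value. The idea is to exploit the fact that within a type all vertices look the same: the out-degree of a vertex $v\in V_j$ is determined by (i) how the clique edges inside $V_j$ are oriented at $v$ (contributing between $0$ and $|V_j|-1$, but this part must be handled carefully — actually one orients the clique so that out-degrees inside a clique of size $q$ form a near-regular tournament, contributing a value in a bounded range only after we decide how many stars sit inside versus across), (ii) for each neighboring type $V_{j'}$ that is ``complete'' to $V_j$, how many of the $|V_{j'}|$ edges to $V_{j'}$ are oriented out of $v$. Crucially, a star centered at $v$ has length $\le s$, so the total out-degree of $v$ that is actually consumed by stars centered at $v$ is a sum of at most (number of stars at $v$) terms each $\le s$; but the number of stars at $v$ need not be bounded. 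To get around this I would use Theorem~\ref{thm:tarsi} / its vertex-cover generalization announced at the end of Section~\ref{sec:tools}: pick within each clique-type a large ``core'' and argue that edges internal to a well-connected core can always be star-decomposed once $s\le\varphi$ of that core, reducing the problem to the edges incident to a bounded-size remainder plus the edges between types. Concretely: each clique type $V_j$ with $|V_j|$ large enough relative to $s$ has $\varphi(G[V_j])\ge s$, so by (the generalization of) Theorem~\ref{thm:tarsi} we may first peel off a star decomposition of $G[V_j]$ using only ``free'' stars, and then what remains to be decided is (a) the bipartite edges between types, which form a blow-up of a graph on $k$ vertices, and (b) the small residual cliques, whose total size is bounded in terms of $s$ and $k$. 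Alternatively one can invoke Remark~\ref{rem:multi}: contract each type to a single vertex, turning between-type edge bundles into parallel edges, so that $G$ becomes a multigraph on $k$ vertices with a vertex cover of size $k$ and all cross-edges among the cover — exactly the situation Remark~\ref{rem:multi} says ILP2 handles — while the vertices of $S$ (one per original vertex, grouped by type) receive only short stars; the number of $z_{\mathbf b,T}$-type variables is then bounded by a function of $k$ and $s$ since $T\subseteq$ (type-set) has $2^k$ choices and $\mathbf b\in\mathcal B_{|T|}$ has entries bounded via $s_i\le s$.

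So the steps, in order, would be: (1) compute the nd-partition $V_1,\dots,V_k$; (2) for every clique type that is large compared with $s$, apply the promised generalization of Theorem~\ref{thm:tarsi} to remove, for free, a star decomposition of a dense internal core, leaving a residual instance where every clique-type is small (size $O(f(s,k))$) — equivalently, set up a multigraph model as in Remark~\ref{rem:multi} where the role of the vertex cover is played by the $O(g(s,k))$ remaining ``heavy'' vertices; (3) write ILP2 (as adapted in Remark~\ref{rem:multi}, allowing $y_{uv}\in[0,m_{uv}]$ for the multiplicities coming from between-type bundles) for this residual instance: its variables $x_{i,u}$, $y_{uv}$, $z_{\mathbf b,T}$ now number at most a function of $s$ and $k=\nd$ only, because all relevant star lengths appearing on ``light'' vertices are $\le s$ and there are $\le 2^{O(g(s,k))}$ subsets $T$; (4) invoke Theorem~\ref{thm:ILPvar} to solve this ILP in FPT time, and reconstruct the decomposition exactly as in the proof of Theorem~\ref{FPT:vc,d} (using the SDR Theorem~\ref{thm:SDR} to place the light vertices and the values $y_{uv}$ to orient the bundles, then distributing stars).

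The main obstacle I expect is step~(2): making precise the claim that the internal edges of a large clique-type can be pre-decomposed ``for free'' without interfering with the global multiplicity budget $\mathbf a$. One must be careful that the stars used internally have lengths among $s_1,\dots,s_d$ \emph{and} that enough of each multiplicity is left for the rest of the graph; this is exactly where the generalization of Tarsi's theorem (stated at the end of Section~\ref{sec:tools} as ``$s\le\varphi(G[C])$ implies an $\sa$-decomposition exists'') is needed, but applied locally and in a budgeted way, so the real work is to phrase a local version that can be glued into the ILP — e.g., by treating the internal-clique contribution as additional slack variables in the degree constraints and proving feasibility of those slacks via the expansion bound. Getting the bookkeeping of ``which stars are forced to be short'' right, so that the number of ILP variables is genuinely bounded by a function of $\nd$ and $s$ alone (and not of $d$, $a$, or $n$), is the crux.
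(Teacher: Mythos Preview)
Your overall architecture is on target and matches the paper's: reduce to a bounded-vertex-cover \msdp instance via contraction and invoke Remark~\ref{rem:multi}/Theorem~\ref{FPT:vc,d}, using the expansion lemma (Lemma~\ref{lem:phi}) to justify that the contracted parts can always be star-decomposed. Two concrete gaps remain, and the first is exactly the obstacle you flagged in step~(2).

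\medskip
\noindent\textbf{Gap 1: ``pre-decompose for free'' is the wrong direction.} You try to peel off a star decomposition of the dense core \emph{before} setting up the ILP, and then worry (rightly) about respecting the global budget $\mathbf a$. The paper avoids this entirely by deferring: for each dense block $B_i$ it contracts $B_i$ to a single vertex $b_i$ and attaches $|E(B_i)|$ \emph{pendant edges} to $b_i$, so that $|E(\tilde G)|=|E(G)|$. The \msdp on $\tilde G$ is then solved with the \emph{original} vectors $(\mathbf s,\mathbf a)$; the ILP is free to spend any portion of the budget on the pendant edges at $b_i$. Only \emph{after} the ILP returns a multi-star decomposition does one un-contract: for each $i$, take the multiset $(\mathbf s_i,\mathbf a_i)$ of stars the ILP placed at $b_i$, and apply Lemma~\ref{lem:phi} to $G_i=G[B_i]$ together with the edges that were oriented out of $B_i$, to realise those stars inside the original graph. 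Because $\varphi(G[B_i])\ge s$ (Lemma~\ref{lem:CBphi}), Lemma~\ref{lem:phi} guarantees this always succeeds, with no budget bookkeeping needed. So the ``local budgeted Tarsi'' lemma you were reaching for is unnecessary; the pendant-edge encoding plus Lemma~\ref{lem:phi} does the job.

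\medskip
\noindent\textbf{Gap 2: you cannot contract type by type.} You propose contracting each neighbourhood-diversity class $V_j$ separately. This fails for large \emph{independent} types: $G[V_j]$ is edgeless, so $\varphi(G[V_j])=0$ and Lemma~\ref{lem:phi} gives nothing. The paper first builds an auxiliary graph $\hat G$ on the large types (edges $=$ ``complete to'') and, via a spanning-tree peeling procedure, groups the large types into blocks $B_1,\dots,B_t$ so that each $G[B_i]$ is either a clique or contains a spanning complete bipartite subgraph with both sides of size at least the threshold $m$; Lemma~\ref{lem:CBphi} then yields $\varphi(G[B_i])\ge m/4\ge s$. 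The large independent types with no large neighbour are collected into $B_0$, which stays uncontracted and lands in the independent side of $\tilde G$. Without this grouping step your contraction produces vertices for which the un-contraction lemma is inapplicable.

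\medskip
A minor point: once you set the threshold at $m=4s$ (so ``large'' means $|V_j|\ge 4s$), the vertex cover of $\tilde G$ has size at most $4s\cdot\nd+\nd$, bounded purely in $(\nd,s)$, and Theorem~\ref{FPT:vc,d} applies directly; the paper instead takes $m=\sqrt{\log\log n}$ and closes with a short bootstrapping case analysis, but the simpler threshold already suffices for the FPT claim.
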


In order to prove Theorem~\ref{thm:s,nd}, we need a couple of lemmas. Let $I=\Gsa$ be an instance of \sdp. First, recall from \eqref{eq:expansion} the definition of the edge expansion of $G$ denoted by $\varphi(G)$. Theorem~\ref{thm:tarsi} asserts that if $s\leq \varphi(G)$, then $I$ is a yes-instance. Here, we prove a generalization of this result, which guarantees the existence of star decomposition even when $s\leq \varphi(G\setminus S)$ when $S$ is a stable set in $G$.  


\begin{lemma} \label{lem:phi}
Let $I=\Gsa$ be an instance of \sdp. Also, let $S$ be a stable set in $G$. If $s=\max_i{s_i} \leq \varphi(G\setminus S)$, then $I$ is a yes instance. In particular, $G$ admits an $\sa$-star decomposition such that the center of all stars are in $V(G)\setminus S$.
\end{lemma}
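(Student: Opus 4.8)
The plan is to reformulate the statement in terms of edge orientations, reduce it to the existence of a suitable ``out-degree target'' on $C:=V(G)\setminus S$, and then obtain that target by the same expansion mechanism that drives Theorem~\ref{thm:tarsi}, the extra ingredient being that the edges incident to $S$ contribute only a fixed shift.

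\textbf{Reformulation via orientations.} Since $S$ is stable, $C$ is a vertex cover of $G$, so every edge has an endpoint in $C$ and $G\setminus S=G[C]$; if $|C|\le 1$ the lemma is immediate, so assume $|C|\ge 2$ (note $\varphi(G[C])\ge s\ge 1$ forces $G[C]$ connected). First I would observe that $G$ has an \sasd with all star centres in $C$ if and only if there is an orientation $\tau$ of $G$ in which every $C$--$S$ edge is directed from $C$ into $S$ (equivalently every vertex of $S$ has out-degree $0$), together with multisets $M_v$ of star lengths, one for each $v\in C$, such that $\sigma(M_v)$ equals the out-degree of $v$ in $\tau$ and each $s_i$ occurs altogether $a_i$ times among the $M_v$ (length-$1$ stars, whose centre is ambiguous, are assigned to their $C$-endpoint). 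Because $S$ is stable, directing all $C$--$S$ edges into $S$ is cost-free, so only an orientation of $G[C]$ has to be chosen.

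\textbf{Reduction to a degree target.} Set $k_v=d_G(\{v\},S)$ for $v\in C$, so in such an orientation the out-degree of $v$ is $o_v+k_v$, where $o_v$ is the out-degree of $v$ in the induced orientation of $G[C]$. By the standard (Hall/flow) characterisation of orientation out-degree sequences, $(o_v)_{v\in C}$ is realisable by an orientation of $G[C]$ iff $\sum_{v\in C}o_v=|E(G[C])|$ and $\sum_{v\in X}o_v\ge |E_G(X)|$ for all $X\subseteq C$ (the matching upper bounds follow by taking complements). Hence it suffices to produce non-negative integers $x_{i,v}$ $(i\in[d],\,v\in C)$ and $c_v$ $(v\in C)$ with
\begin{align*}
&\sum_{v\in C}x_{i,v}=a_i\quad(i\in[d]),\qquad \sum_{i=1}^{d}s_i\,x_{i,v}=c_v\quad(v\in C),\\
&\sum_{v\in X}\bigl(c_v-k_v\bigr)\ \ge\ |E_G(X)|\quad(X\subseteq C).
\end{align*}
Here $X=C$ forces $\sum_v c_v=\mathbf{s}.\mathbf{a}=|E(G)|$ and $X=\{v\}$ forces $c_v\ge k_v$; given these data, the Hall/flow fact yields the orientation of $G[C]$, and the $x_{i,v}$ say how many stars of each length to centre at each $v\in C$, completing the decomposition.

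\textbf{Where $s\le\varphi(G[C])$ enters.} This is the heart of the argument. The inequalities above say precisely that $(c_v-k_v)_{v\in C}$ is a realisable out-degree sequence of $G[C]$; that polytope is integral, and the ``balanced'' choice $c_v-k_v\approx\tfrac12\deg_{G[C]}(v)$ lies in it with slack: for every $X$,
\[
\sum_{v\in X}\bigl(c_v-k_v\bigr)-|E_G(X)|\ \approx\ \tfrac12\,d_G\bigl(X,\,C\setminus X\bigr)\ \ge\ s\cdot\frac{|X|\,|C\setminus X|}{|C|},
\]
the last step being exactly $\varphi(G[C])\ge s$ read off from \eqref{eq:expansion}. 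I would then build the $c_v$ by distributing the $a_i$ copies of $s_i$ over the vertices of $C$ greedily (largest first), keeping each partial total close to its balanced value; one assignment changes some $c_v$ by at most $s$, and the displayed slack is precisely what guarantees that no realisability inequality is ever violated along the way. This is the same mechanism as in the proof of Theorem~\ref{thm:tarsi}; the only new ingredients are the additive shift by $k_v$ and the restriction from $G$ to $G[C]$, both harmless because $S$ is stable: vertices of $S$ receive no star, and the $C$--$S$ edges enter the constraints only through the fixed constants $k_v$, never linking two vertices of $C$.

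\textbf{Main obstacle.} The real work is this last step: exhibiting a degree target that is simultaneously \emph{realisable} (a lattice point of the out-degree polytope of $G[C]$ shifted by the $k_v$) and \emph{packable} (the block sums of a partition of the star multiset). Packability is not a polyhedral condition, so LP duality does not settle it by itself; one must run --- or cite from \cite{tarsi} --- the adaptive star-by-star distribution and verify that the expansion slack $s\,|X|\,|C\setminus X|/|C|$ dominates the perturbation accumulated during the process. By comparison, the orientation reformulation, the Hall/flow characterisation, and the bookkeeping with $k_v$ are all routine.
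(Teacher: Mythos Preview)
Your proposal is correct and follows essentially the same route as the paper: reformulate via an orientation with all $C$--$S$ edges directed into $S$, reduce to the out-degree realisability criterion for $G[C]$ (this is exactly Lemma~\ref{lem:orient}), and then distribute the star multiset over $C$ so that the resulting targets are balanced enough for the expansion bound $s\le\varphi(G[C])$ to force every cut inequality. The paper makes the last step concrete by swapping items between bins until $\max_i\delta'(i)-\min_i\delta'(i)\le 2s$ (with $\delta'(i)=\deg_{G[C]}(i)-2(c_i-k_i)$) and then running a short averaging argument, which is a cleaner substitute for your ``greedy, largest first'' sketch but encodes the same mechanism you identify.
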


In order to prove Lemma~\ref{lem:phi}, we need the following result from \cite{tarsi} about existence of a orientation of a graph with prescribed out-degrees.

\begin{lemma}\label{lem:orient} {\rm \cite{tarsi}}
	Let $H$ be a graph on the vertex set $V=[n]$ and $\mathbf{d}^+=(d^+_1,\ldots,d^+_n)$ be a sequence of nonnegative integers.
	For each $i\in [n]$, define $\delta(i)=\deg(i)-2d^+_i$.
	There exists an orientation of $H$ such that $\deg^+(i)=d^+_i$ for each $i\in [n]$ if and only if
	\begin{enumerate}
		\item $\delta(V)=0$,
		\item for each $A\subseteq V$, $\delta(A)\leq |E(A,V\setminus A)|$,
	\end{enumerate}  
 where $\delta(A)=\sum_{i\in A} \delta(i)$.
\end{lemma}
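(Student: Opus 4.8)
The plan is to prove both implications, obtaining necessity by a direct edge-counting argument and sufficiency by reducing the existence of the orientation to an $\eta$-SDR and applying Theorem~\ref{thm:SDR}. Throughout, I would use the observation that in any orientation with $\deg^+(i)=d^+_i$ one has $\deg^-(i)=\deg(i)-d^+_i$, so that $\delta(i)=\deg(i)-2d^+_i=\deg^-(i)-\deg^+(i)$ measures the excess of in-degree over out-degree at $i$.

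For necessity, suppose such an orientation exists. Summing out-degrees over all vertices counts every edge exactly once, so $\sum_i d^+_i=|E(H)|$, and hence $\delta(V)=\sum_i\deg(i)-2\sum_i d^+_i=2|E(H)|-2|E(H)|=0$, giving condition~1. For condition~2, fix $A\subseteq V$ and note that each edge inside $A$ contributes $+1$ to exactly one out-degree and $+1$ to exactly one in-degree among vertices of $A$, hence cancels in $\delta(A)=\sum_{i\in A}(\deg^-(i)-\deg^+(i))$. Only the edges of $E(A,V\setminus A)$ survive: those oriented into $A$ add $1$ and those oriented out of $A$ subtract $1$. Therefore $\delta(A)$ equals (crossing edges into $A$) minus (crossing edges out of $A$), which is at most $|E(A,V\setminus A)|$, establishing condition~2.

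For sufficiency, I would encode an orientation as a choice, for each edge, of which endpoint is its tail. Set $I=V$, let $F_i$ be the set of edges of $H$ incident with $i$, and let $\eta(i)=d^+_i$. An $\eta$-SDR $(S_i)_{i\in V}$ picks pairwise disjoint sets $S_i\subseteq F_i$ with $|S_i|=d^+_i$; since condition~1 gives $\sum_i d^+_i=|E(H)|$ and the $S_i$ are disjoint, they partition $E(H)$, so every edge $e=\{i,j\}$ lies in exactly one of $S_i,S_j$. Orienting $e$ away from whichever endpoint's set contains it then yields $\deg^+(i)=|S_i|=d^+_i$ for all $i$, the desired orientation. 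It thus suffices to verify the hypothesis of Theorem~\ref{thm:SDR}, namely $|\bigcup_{j\in J}F_j|\ge\sum_{j\in J}\eta(j)$ for every $J\subseteq V$.

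The heart of the argument is translating this Hall-type inequality into condition~2. Writing $e(J)=|E(J)|$ for the number of edges inside $J$, one has $\bigcup_{j\in J}F_j=E(J)\cup E(J,V\setminus J)$ and $\sum_{j\in J}\deg(j)=2e(J)+|E(J,V\setminus J)|$, whence
\[
\delta(J)=2e(J)+|E(J,V\setminus J)|-2\sum_{j\in J}d^+_j.
\]
A short rearrangement shows that $|\bigcup_{j\in J}F_j|=e(J)+|E(J,V\setminus J)|\ge\sum_{j\in J}d^+_j$ is equivalent to $\delta(J)\ge -|E(J,V\setminus J)|$. Applying condition~2 to the complement $V\setminus J$ and using condition~1, so that $\delta(V\setminus J)=-\delta(J)$, yields exactly $\delta(J)\ge -|E(J,V\setminus J)|$; hence the SDR hypothesis holds and Theorem~\ref{thm:SDR} supplies the orientation. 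The main obstacle I anticipate is precisely this bookkeeping: correctly expressing $|\bigcup_{j\in J}F_j|$ and $\sum_{j\in J}\deg(j)$ in terms of internal and crossing edges, and invoking condition~1 at the right moment to pass to the complement, so that condition~2 matches the SDR inequality rather than its reverse.
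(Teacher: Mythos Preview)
Your proof is correct. Note, however, that the paper does not actually supply its own proof of this lemma: it is quoted from \cite{tarsi} and used as a black box in the proof of Lemma~\ref{lem:phi}. So there is no ``paper's proof'' to compare against here.

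That said, your argument is clean and fits neatly within the paper's toolkit, since it leverages Theorem~\ref{thm:SDR} (the $\eta$-SDR criterion) rather than appealing to max-flow/min-cut or a direct Hall-on-bipartite-incidence argument as is more traditional for this orientation lemma. The key translation---that the Hall inequality $|\bigcup_{j\in J}F_j|\ge\sum_{j\in J}d^+_j$ is equivalent to $\delta(J)\ge -|E(J,V\setminus J)|$, which in turn follows from condition~2 applied to $V\setminus J$ together with condition~1---is carried out correctly. The only point worth tightening in a final write-up is to state explicitly that $\bigcup_{i\in V}F_i=E(H)$ and that disjointness of the $S_i$ plus $\sum_i|S_i|=|E(H)|$ forces the $S_i$ to partition $E(H)$; you do say this, but it is the one place a careless reader might stumble.
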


\begin{proof}[Proof of Lemma~\ref{lem:phi}]
Let $\mathcal{S}$ be a multiset consisting of integer $s_i$ with multiplicity $a_i$, for each $i\in [d]$. Also, let $V\setminus S= [n]$. Now, partition $\mathcal{S}$ into $n$ multisets $\mathcal{B}=(B_1,\ldots, B_n)$ and define $d'_i= \sum_{x\in B_i} x$. 
Moreover, for each $i\in [n]$, define $\deg'(i)=\deg_G(i)+|E(i,S)|$ (note that any edge from $i$ to $S$ is counted twice in $\deg'(i)$) and define $\delta'(i)=\deg'(i)-2d'_i$. Also, for each $A\subseteq [n]$, define $\delta'(A)=\sum_{i\in A} \delta'(i)$. We choose $\mathcal{B}$ such that 
\begin{equation}\label{eq:equitable}
\max_i \delta'(i)-\min_i \delta'(i)\leq 2s.
\end{equation}
 This can be done, because if there are two $i,j\in[n]$ such that $\delta'(i)-\delta'(j)>2s$, then we can move an arbitrary integer $s_k$ from $B_j$ to $B_i$ and reduce $|\delta'(i)-\delta'(j)|$ and doing this iteratively we can lead to a partition satisfying Condition \eqref{eq:equitable}.

 First, it is clear that $\delta'([n])=0$. Now, we prove that for each $A\subseteq [n]$, $\delta'(A)\leq d(A,[n]\setminus A)$. To see this, take a subset $A\subseteq [n]$ and let $r=|A|$ and $\mu=\delta'(A)/r$. By \eqref{eq:equitable}, we have $\delta'(i)\geq \mu-2s$, for each $i\in [n]$. Therefore,
 
 \begin{align*}
 -\mu r= -\delta'(A)=\delta'([n]\setminus A) \geq (\mu-2s)(n-r).
 \end{align*}   

So, $\mu n\leq 2s(n-r)$ and thus,
\[
\delta'(A)= \mu r \leq 2s\, \dfrac{r(n-r)}{n}\leq 2\varphi(G\setminus S)\, \dfrac{r(n-r)}{n} \leq d(A,[n]\setminus A).
\]
Now, for each $i\in [n]$, let $d^+_i=d'_i-|E(i,S)|$ and $\delta(i)=\deg_{G\setminus S}(i)-2d^+_i$. It is clear that for each $i\in [n]$, $\delta'(i)=\delta(i)$. So, conditions of Lemma~\ref{lem:orient} hold for $H=G\setminus S$ and thus, there exists an orientation of $G\setminus S$ such that $\deg^+_{G\setminus S}(i)=d^+_i$ for all $i\in [n]$. Finally, we orient all edges $uv$, $u\in [n], v\in S$, from $u$ to $v$. Hence, for each $i\in [n]$, $\deg^+_G(i)=d'_i$ and we assign all stars with lengths in $B_i$ to the center $i$ and obtain an \sasd\ for $G$. 
\end{proof}

We also need the following observation regarding the edge expansion of complete and complete bipartite graphs. 
\begin{lemma} \label{lem:CBphi}
Let $G=(V, E)$ be a complete bipartite graph with parts $X$ and $Y$ such that $|X|,|Y|\geq n$. Then, we have $\varphi(G)\geq n/4$. Also, let $G=(V,E)$ be complete graph on at least $n$ vertices. Then, we have $\varphi(G)\geq n/2$.
\end{lemma}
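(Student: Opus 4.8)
The plan is to prove each half by bounding the edge-expansion expression in \eqref{eq:expansion} directly, for every nonempty proper $S\subseteq V$, using that in these highly symmetric graphs the cut $d(S,V\setminus S)$ has a clean closed form. For $G=K_N$ with $N\ge n$: any $S$ with $|S|=k$, $1\le k\le N-1$, has $d(S,V\setminus S)=k(N-k)$, so the expression in \eqref{eq:expansion} for this $S$ equals $\frac{1}{2}\bigl(\frac1k+\frac1{N-k}\bigr)k(N-k)=\frac{1}{2}\cdot\frac{N}{k(N-k)}\cdot k(N-k)=\frac N2$. Since this is the same for every $S$, we get $\varphi(K_N)=N/2\ge n/2$ (the bound is in fact tight here).

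For $G$ complete bipartite with parts $X,Y$, $p\df|X|\ge n$, $q\df|Y|\ge n$: fix a nonempty proper $S$ and set $a\df|S\cap X|$, $b\df|S\cap Y|$, so that $d(S,V\setminus S)=a(q-b)+b(p-a)$. Since the expression in \eqref{eq:expansion} is invariant under $S\leftrightarrow V\setminus S$, I may assume $|S|\le |V\setminus S|$, i.e.\ $a+b\le (p+q)/2$. The key step is the inequality
\[
(p+q)\,\bigl(a(q-b)+b(p-a)\bigr)\ \ge\ pq\,(a+b),
\]
which, after expanding both sides, is equivalent to $aq^2+bp^2\ge 2ab(p+q)$. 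For $a,b>0$ this is Cauchy--Schwarz in Engel form, $\frac{p^2}{a}+\frac{q^2}{b}\ge\frac{(p+q)^2}{a+b}\ge 2(p+q)$, where the last inequality uses $a+b\le (p+q)/2$; and it is immediate when $a=0$ or $b=0$. Consequently $\dfrac{d(S,V\setminus S)}{|S|}\ge\dfrac{pq}{p+q}\ge\dfrac n2$, the last step because $p,q\ge n$ gives $\frac1p+\frac1q\le\frac2n$. Hence
\[
\frac{1}{2}\Bigl(\frac{1}{|S|}+\frac{1}{|V\setminus S|}\Bigr)d(S,V\setminus S)\ \ge\ \frac{d(S,V\setminus S)}{2|S|}\ \ge\ \frac n4 ,
\]
and taking the minimum over all $S$ yields $\varphi(G)\ge n/4$.

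The argument is short; the only place that needs a moment of thought is isolating the intermediate estimate $(p+q)\,d(S,V\setminus S)\ge pq\,(a+b)$ and recognizing the resulting inequality $aq^2+bp^2\ge 2ab(p+q)$ as an instance of Cauchy--Schwarz. Note that the symmetrization $|S|\le|V\setminus S|$ is genuinely needed: it is exactly what makes the bound $a+b\le(p+q)/2$ available, and without that bound the inequality $aq^2+bp^2\ge 2ab(p+q)$ fails (e.g.\ for $a=p$, $b=q$). Everything else is routine bookkeeping.
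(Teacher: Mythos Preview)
Your proof is correct. The complete-graph half is identical to the paper's argument. For the bipartite half you take a genuinely different route: the paper fixes $|S\cap X|\le n_1/2$ (via the $S\leftrightarrow V\setminus S$ symmetry) and then splits into two cases according to whether $|S\cap Y|\ge n_2/2$ or $|S\cap Y|\le n_2/2$, bounding $d(S,V\setminus S)$ crudely in each case; you instead symmetrize to $|S|\le|V\setminus S|$ and prove the single inequality $(p+q)\,d(S,V\setminus S)\ge pq\,|S|$ via Cauchy--Schwarz (Engel form), which immediately gives $d(S,V\setminus S)/|S|\ge pq/(p+q)\ge n/2$. Your approach is cleaner and avoids the case distinction at the cost of invoking Cauchy--Schwarz, whereas the paper's proof is entirely elementary but slightly longer. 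Both land on the same final estimate $\varphi(G)\ge n/4$, and your observation that the symmetrization is genuinely needed (the inequality $aq^2+bp^2\ge 2ab(p+q)$ fails at $a=p$, $b=q$) is a nice sanity check.
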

\begin{proof}
Suppose that $G=(V, E)$ be the complete bipartite graph with parts $X$ and $Y$, where $|X|=n_1\geq n$ and $|Y|=n_2\geq n$ and let $(S,V\setminus S)$ be the bipartition achieving $\varphi(G)$, and without loss of generality assume that $|S\cap X|\leq n_1/2$. Now, if $|S\cap Y|\geq n_2/2$, then, $d(S,V\setminus S)\geq n_1n_2/4$ and we have
\[
\varphi(G)= \dfrac{1}{2}\left(\dfrac{1}{|S|}+\dfrac{1}{|V\setminus S|}\right)d(S,V\setminus S)\geq \dfrac{1}{8}\dfrac{n_1n_2(n_1+n_2)}{|S||V\setminus S|}\geq \dfrac{1}{2}\dfrac{n_1n_2}{n_1+n_2}\geq \dfrac{n}{4}.
 \]
Now, suppose that $|S\cap Y|\leq n_2/2$. Then, 
\[
\varphi(G)\geq \dfrac{1}{2}\left(\dfrac{d(S,V\setminus S)}{|S|}\right)\geq \dfrac{|S\cap X|\dfrac{n_2}{2}+|S\cap Y|\dfrac{n_1}{2}}{2|S|}\geq \dfrac{n|S|}{4|S|}=\dfrac{n}{4}.
\]
Thus, we always have $\varphi(G)\geq{n}/{4}$, as desired.

Finally, let $G=(V,E)$ be the complete graph on at least $n$ vertices and $(S,V\setminus S)$ be a partition of $V$. Then, $d(S,V\setminus S)=|S||V\setminus S|$ and thus 
\[\varphi(G)= \dfrac{1}{2}\left(\dfrac{1}{|S|}+\dfrac{1}{|V\setminus S|}\right)d(S,V\setminus S)\geq \dfrac{1}{2}\dfrac{|S||V\setminus S| n}{|S||V\setminus S|}=\dfrac{n}{2}.
\]
\end{proof}

Now, we are ready to prove Theorem~\ref{thm:s,nd}. 
\begin{proof}[Proof of Theorem~\ref{thm:s,nd}.]
	Let $I=\Gsa$ be an instance of \sdp such that $\nd(G)={\nd}$ and $s=\max_i s_i$. Also, let $(V_1,\ldots,V_{\nd})$ be a neighborhood diversity decomposition, where $|V_1|\geq |V_2|\geq \cdots\geq |V_{\nd}|$. 
	
	First, note that if $s\geq \sqrt{\log\log n}/4$, then $n\leq 2^{2^{16s^2}}$ and ILP1 has at most $2^{2^{16s^2+1}}+s2^{2^{16s^2}}$ variables. Therefore, by Theorem~\ref{thm:ILPvar}, the problem can be solved in FPT time with respect to $s$. Thus, set $m=\sqrt{\log\log n}$ and suppose that $s\leq  m/4$.
	
	Now, suppose that $k\in[nd]$ be such that $|V_k|\geq m>|V_{k+1}|$ (note that if $|V_{nd}|\geq m$, then we take $V_{nd+1}$ as the empty set). Now, let $\hat{V}=V_1\cup\cdots\cup V_k$. We construct a partition of $\hat{V}$ as follows. 

First, let $\hat{G}$ be the graph whose vertices are corresponding to $V_1,\dots,V_k$ and two vertices corresponding to $V_i$ and $V_j$ are adjacent if $V_i$ is complete to $V_j$. Now, set $i=1$ and do the following procedure. 
 
	\begin{enumerate}
		\item Consider a non-null connected component of $\hat{G}$, let $T$ be a spanning tree of this component rooted at a vertex $r$.
		\item Consider a vertex $v\in V(T)$ whose all children are leaves and  set $B_i$ be the union of all sets $V_j$ corresponding to the vertex $v$ and all its children and remove all these vertices from $T$. Now, if $V(T)=\{r\}$, then add the set $V_j$ corresponding to $r$ to $B_i$. If $V(T)=\emptyset$, then go to (3), otherwise set $i=i+1$ and repeat (2).
        \item Repeat (1) and (2) for all non-null connected components of $\hat{G}$. If all remaining vertices in $\hat{G}$ are singletone vertices, then go (4).
		\item If there is some $v\in V(\hat{G})$ such that its corresponding set $V_j$ is a clique, then remove $v$ from $\hat{G}$ and set $B_i=V_j$,   $i=i+1$ and repeat (4). Otherwise, set $B_0$ be the union of all sets $V_j$ corresponding to remaining vertices in $\hat{G}$ and halt. 
	\end{enumerate}

    \begin{figure}{ht}\
\includegraphics[width=\textwidth,
height=\textheight,
keepaspectratio,
trim=50 80 50 50pt]{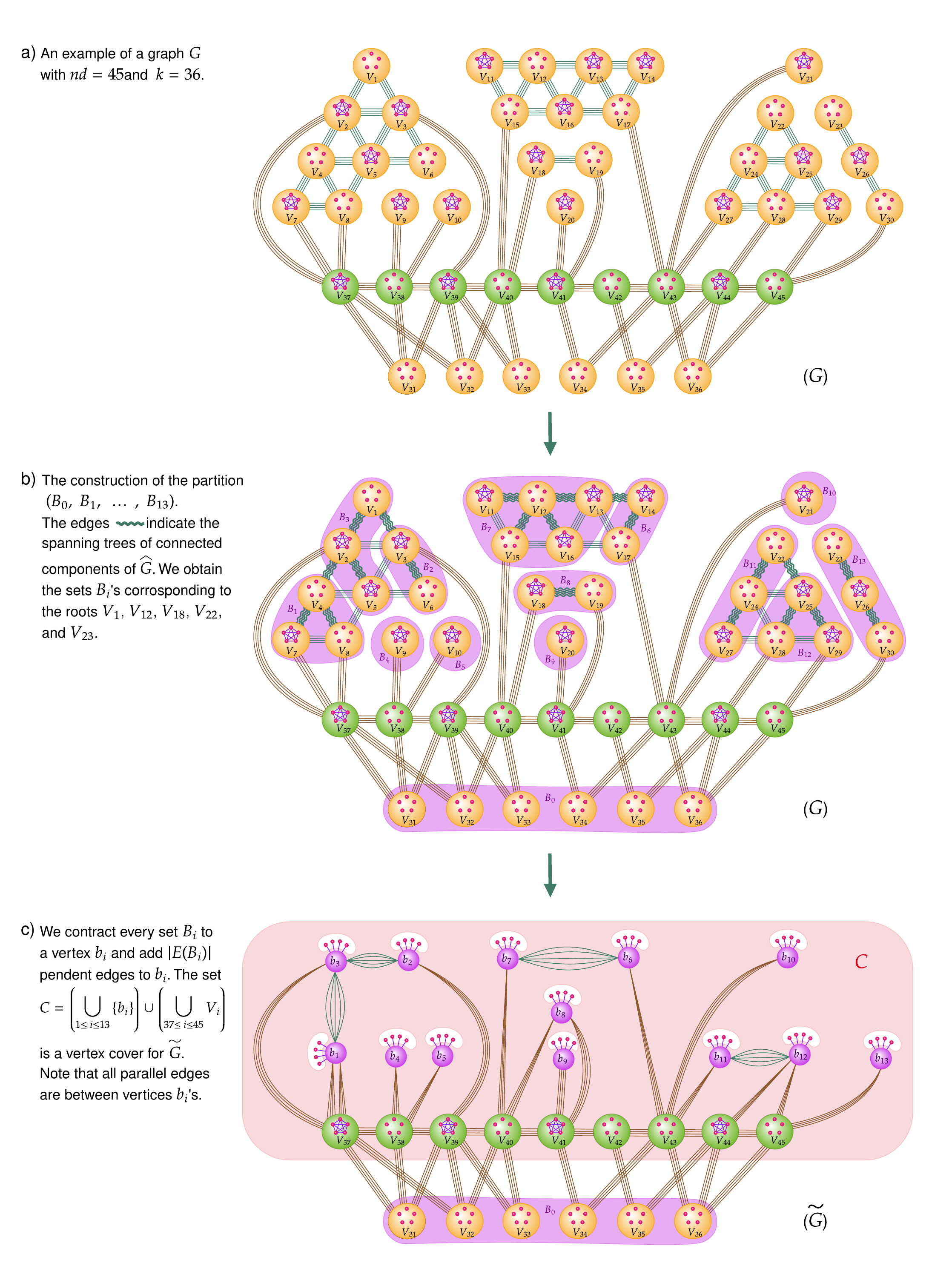}
\vspace{10pt}\caption{An example of a graph $G$ with its neighborhood diversity decomposition and the construction of the graph $\tilde{G}$ from $G$.}\label{fig1}
\end{figure}

Let $(B_0,B_1,\ldots, B_{t})$ be the partition of $\hat{V}$ obtained as above. It is clear that the induced graph $G[B_0]$ is an stable set and there is no edge between $B_0$ and $\cup_{i=1}^t B_i$. Also, $G[B_i]$ for each $i\geq 1$, either is a complete graph, or contains a complete bipartite graph as a spanning subgraph. Now, we construct the multigraph $\tilde{G}$ from $G$ as follows. First, for each $i\in[t]$ contract all vertices in $B_i$ into a single vertex $b_i$ and add $|E(B_i)|$ pendent edges to $b_i$. Note that $|E(\tilde{G})|=|E(G)|$ (for an example, see the figure \ref{fig1}). First, we prove that $I $ is yes instance for \sdp\ if and only if $\tilde{I} $ is a yes instance for \msdp.  

First, note that if $G$ admits an $\sa$-star decomposition, then by contracting $G$ to $\tilde{G}$ and the star decomposition to a multi-star decomposition, we can clearly see that $\tilde{G}$ admits a multi-star decomposition. Now, assume that $\tilde{I}$ is a yes-instance for \msdp. The solution for $\tilde{I}$ induces an orientation on the edges of $\tilde{G}$. For each $i\in [t]$, let $G_i$ be the subgraph of $G$ obtained from the induced subgraph $G[B_i]$ by adding all edges which are directed from $B_i$ to $V\setminus B_i$ (corresponding to the orientation of $\tilde{G}$). Also, let $(\mathbf{s}_i,\mathbf{a}_i)$ be the star vectors assigned to $b_i$. By Lemma~\ref{lem:CBphi}, we know that $\varphi(G[B_i])\geq m/4\geq s$. So, by Lemma~\ref{lem:phi}, $G_i$ admits an $(\mathbf{s}_i,\mathbf{a}_i)$-star decomposition. Finally, in the graph $\tilde{G}$ blow-up all vertices $b_i$ into $B_i$  to recover the graph $G$ and in the multi-star decomposition of $\tilde{G}$, replace all stars on vertices $b_1,\ldots, b_t$ with star decompositions of $G_1,\ldots, G_t$ and keep other stars assigned to vertices in $V(G)\setminus \cup_{i=1}^t B_i$ unchanged. This gives an star decomposition of $G$. Therefore, $I$ is a yes-instance, as desired.

Now, we prove that \msdp can be solved for $\tilde{I}$ in FPT-time with respect to $(\nd,d)$. To see this, first note that $C=V_{k+1}\cup \cdots\cup V_{\nd}\cup \{b_1,\ldots,b_t\}$ is a vertex cover for $\tilde{G}$ and $\vc(\tilde{G})\leq (\nd-k)m +t\leq m \nd $. Also, there is no edge between $B_0$ and $\{b_1,\ldots, b_t\}$, so every edge in $E(C,V(\tilde{G}\setminus C))$ is simple. Thus, by Remark~\ref{rem:multi}, Theorem~\ref{FPT:vc,d} implies that the \msdp\ can be solved for the instance $\tilde{I}=(\tilde{G},\mathbf{s},\mathbf{a}) $  in time  $O^*(2^{O(d.\log d.m.\nd.2^{3m\nd})})$. Let $k=\nd d\log d$ and note that the runtime is at most $O^*(2^{O(k.m.2^{3km})})$. First, if $k\geq m/6$, then $n\leq 2^{2^{36k^2}}$ and by Theorem~\ref{thm:ILPvar}, ILP1 can be solved in FPT-time with respect to $k$ and so $(\nd,d)$. So, suppose that $k\leq m/6$. In this case, the runtime is at most $O^*(2^{O(m^22^{m^2/2})})=O^*(2^{O(\log \log n \sqrt{\log n})})\leq O^*(2^{o(\log n)}) $ which is a polynomial of $n$. 
\end{proof}

\section{Concluding Remarks}
In this paper, we studied parameterized complexity of the star decomposition problem. We proved
that the problem is para-NP-hard with respect to the tree-depth, the neighborhood diversity and the
feedback vertex number of the input graph as well as the number of star types $d$ and the
maximum size of the stars $s$. It is also W[1]-hard and in XP with respect to the vertex cover
number of the input graph. On the other hand, the problem is FPT with respect to combined
parameters $(\vc,d)$, $(\td,s)$ and $(\nd,s)$. Moreover, it is W[1]-hard with respect to $(\td,d)$,
$(\nd,d)$ and $(\fvs,d)$. The presented parameterized landscape gives rise to the natural question
that if the problem is FPT with respect to $(\tw,s)$ or $(\fvs,s)$? 
Even we do not know if the problem
is FPT with respect to $s$ on trees (or forests). When the input graph is a star forest, then the
problem is equivalent to the bin packing problem which is known to be FPT with respect to the
maximum size of items (which is equivalent to $s$) \cite{koutecky}. Thus, the first step is to prove
that the star decomposition problem is FPT with respect to $s$ on caterpillar trees. A second
question is that if the problem is in XP with respect to $(\nd,d)$?

\bibliographystyle{unsrt}
\bibliography{References}
\end{document}

a) An example of a graph $G$ with $nd=45$ and $k=36$.

b) The construction of the partition $(B_0, B_1,\ldots, B_t)$.

c) The construction of the graph $\tilde{G} $ from $G$.